\newcommand{\cI}{\mathcal{I}}
\newcommand{\cL}{\mathcal{L}}
\newcommand{\cM}{\mathcal{M}}
\newcommand{\cT}{\mathcal{T}}
\newcommand{\FF}{\mathbb{F}}
\newcommand{\NN}{\mathbb{N}}
\newcommand{\QQ}{\mathbb{Q}}
\newcommand{\GF}{\mathbb{GF}}
\def\bw {{\rm bw}}
\def\ind {{\rm ind}}
\def\first#1{{#1}}
\newtheorem{theorem}{Theorem}
\newtheorem{lemma}[theorem]{Lemma}
\newtheorem{corollary}[theorem]{Corollary}
\begin{document}
\title{Decomposition width---a new width parameter for matroids}
\author{Daniel Kr{\'a}l'\thanks{Institute for Theoretical Computer Science (ITI), Faculty of Mathematics and Physics, Charles University, Malostransk{\'e} n{\'a}m{\v e}st{\'\i} 25, 118 00 Prague, Czech Republic. E-mail: {\tt
kral@kam.mff.cuni.cz}.
                                Institute for Theoretical Computer Science (ITI) is supported as project 1M0545 by Czech Ministry of Education.
				This research was also supported by the grant GACR 201/09/0197.}}

\date{}
\maketitle
\begin{abstract}
We introduce a new width parameter for matroids called decomposition width and
prove that every matroid property expressible in the monadic second order logic
can be computed in linear time for matroids with bounded decomposition width
if their decomposition is given.
Since decompositions of small width for our new notion can be computed
in polynomial time for matroids of bounded branch-width represented over finite fields,
our results include recent algorithmic results of Hlin{\v e}n{\'y} [J.~Combin. Theory Ser.~B 96 (2006), 325--351]
in this area and extend his results to matroids not necessarily representable over finite fields.
\end{abstract}

\section{Introduction}
\label{sect-intro}

Algorithmic aspects of graph tree-width form an important part of algorithmic
graph theory. Efficient algorithms for computing tree-decompositions of graphs
have been designed~\cite{bib-arnborg87+,bib-bodlaender93}
and a lot of NP-complete problems become tractable for classes of graphs
with bounded tree-width~\cite{bib-arnborg91+,bib-bodlaender88}.
Most of results of the latter kind are implied
by a general result of Courcelle~\cite{bib-courcelle,bib-courcelle97} that every graph property
expressible in the monadic second-order logic can be decided in linear time
for graphs with bounded tree-width.

As matroids are combinatorial structures that generalize graphs,
it is natural to ask which of these results translate to matroids.
Similarly, as in the case of graphs, some hard problems (that cannot
be solved in polynomial time for general matroids) can be efficiently solved
for (represented) matroids of small width. Though the notion of
tree-width generalizes to matroids~\cite{bib-hlineny-tw,bib-hlineny-tw+},
a more natural width parameter for matroids is the notion of \first{branch-width}.
Let us postpone a formal definition of this width parameter to
Section~\ref{sect-notation} and just mention at this point that
the branch-width of matroids is linearly related with their tree-width,
in particular, the branch-width of a graphic matroid is bounded
by twice the tree-width of the corresponding graph.

There are two main algorithmic aspects which one needs to address
with respect to a width parameter of a combinatorial structure:
\begin{itemize}
\item the efficiency of computing decompositions of small width, and
\item tractability of hard problems for input structures of small width.
\end{itemize}
The first issue has been successfully settled with respect to matroid
branch-width: for every $k$, there exists an algorithm that either computes 
a branch-decomposition of an input matroid with width at most $k$ or
outputs that there is no such branch-decomposition. The first such
algorithm has been found by Oum and Seymour~\cite{bib-oum+} (an approximation
algorithm has been known earlier~\cite{bib-oum-approx}, also
see~\cite{bib-hlineny-fpt-approx} for the case of matroids represented
over a finite field) and
a fixed parameter algorithm was later designed
by Hlin{\v e}n{\'y} and Oum~\cite{bib-hlineny-fpt}.

The tractability results, which include deciding monadic second-order
logic properties~\cite{bib-hlineny-mfcs,bib-hlineny-msol},
computing and evaluating the Tutte polynomial~\cite{bib-hlineny-tutte} and
computing and counting representations over finite fields~\cite{bib-stacs},
usually require restricting to matroids
represented over finite fields (see Section~\ref{sect-notation}
for the definition). This is consistent with the facts that no subexponential
algorithm can decide whether a given matroid is binary~\cite{bib-seymour81},
i.e., representable over $\GF(2)$, even for matroids
with bounded branch-width and that it is NP-hard to decide representability
over $\GF(q)$ for every prime power $q\ge 4$ even for matroids
with bounded branch-width represented over $\QQ$~\cite{bib-hlineny-mfcs06},
as well as
with structural results on matroids~\cite{bib-geelen1+,bib-geelen2+,bib-geelen02+}
that also suggest that matroids representable over finite fields
are close to graphic matroids (and thus graphs) but general matroids can be quite different.

The aim of this paper is to introduce another width parameter for matroids
which will allow to extend the tractability results to matroids not necessarily
representable over finite fields. The cost that needs to be paid for this
is that this new width parameter cannot be bounded by the branch-width in general
though it is bounded by the branch-width for matroids
representable over a fixed finite field. Hence, we can obtain
all tractability results mentioned in the previous paragraph.
The new notion captures the ``structural finiteness''
on cuts represented in the branch decomposition essential
for the tractability results and
is closely related to rooted configurations as introduced
in~\cite{bib-geelen02+} and indistinguishable sets from~\cite{bib-stacs}.

Let us state our results precisely. In Section~\ref{sect-notation},
we introduce a \first{$K$-decomposition} of a matroid (where $K$ is an integer) and
define the \first{decomposition width} of a matroid $\cM$ to be the smallest integer $K$ such that
$\cM$ has a $K$-decomposition.
In Section~\ref{sect-construct},
we show that for every $k$ and $q$, the decomposition width of a matroid with branch-width at most $k$
that is representable over $\GF(q)$ is at most $F(k,q)$ and
that $F(k,q)$-decomposition of any such matroid can be computed in polynomial time.
In Sections~\ref{sect-tutte}, \ref{sect-msol} and \ref{sect-repr}, we show that
for every $K$, there exist polynomial-time algorithms (with the degree of the polynomial
independent of $K$) for computing and evaluating the Tutte polynomial,
deciding monadic second-order logic properties, deciding representability and
constructing and counting representations over finite fields when the input matroid
is given by its $K$-decomposition. In particular, our results imply all the tractability
results known for matroids represented over finite field (we here claim only the polynomiality,
not matching the running times which we did not try to optimize throughout the paper).

A $K$-decomposition of a matroid actually captures the whole structure of a matroid, i.e., the matroid
is fully described by its $K$-decomposition, and thus it can be understood as an alternative way
of providing the input matroid. In fact, for a fixed $K$, the size of a $K$-decomposition is linear
in the number of matroid elements and thus this representation of matroids is very suitable
for this purpose. Let us state (without a proof) that $K$-decompositions of matroids {\em supports}
contraction and deletion of matroid elements without increasing the decomposition width as well as
some other matroid operations with increasing the decomposition width by a constant,
e.g., relaxing a circuit-hyperplane increases the decomposition width by at most one.
By {\em supporting} we mean that a $K$-decomposition of the new matroid can be efficiently
computed from the $K$-decomposition of the original one. Hence, the definition of the decomposition
width does not only yield a framework extending tractability results for matroids represented
over finite fields with bounded branch-width but it also yields a compact data structure
for representing input matroids.

\section{Notation}
\label{sect-notation}

In this section, we formally introduce the notions
which are used in this paper. We start with basic notions and
we then introduce matroid representations, branch-decompositions and
our new width parameter. We also refer the reader
to the monographs~\cite{bib-oxley,bib-truemper}
for further exposition on matroids.

A \first{matroid $\cM$} is a pair $(E,\cI)$
where $\cI\subseteq 2^E$. The elements of $E$ are
called \first{elements} of $\cM$, $E$ is the \first{ground set} of $\cM$ and
the sets contained in $\cI$ are called \first{independent} sets.
The set $\cI$ is required to contain the empty set, to be hereditary, i.e.,
for every $F\in\cI$, $\cI$ must contain all subsets of $F$, and
to satisfy the exchange axiom: if $F$ and $F'$ are two sets of $\cI$ such
that $|F|<|F'|$, then there exists $x\in F'$ such that $F\cup\{x\}\in\cI$.
The \first{rank} of a set $F$,
denoted by $r(F)$, is the size
of the largest independent subset of $F$ (it can be inferred from the exchange
axiom that all inclusion-wise maximal independent subsets of $F$ have the same size).
In the rest, we often understand matroids as sets of elements
equipped with a property of ``being independent''. We use $r(\cM)$
for the rank of the ground set of $E$.

If $F$ is a set of elements of $\cM$, then $\cM\setminus F$
is the matroid obtained from $\cM$ by \first{deleting} the elements of $F$,
i.e., the elements of $\cM\setminus F$ are those not contained in $F$ and
a subset $F'$ of such elements is independent in the matroid $\cM\setminus F$
if and only if $F'$ is independent in $\cM$. The matroid $\cM/F$
which is obtained by \first{contraction} of $F$ is the following matroid:
the elements of $\cM/F$ are those not
contained in $F$ and a subset $F'$ of such elements is independent
in $\cM/F$ if and only if $r(F\cup F')=|F|+|F'|$.
A \first{loop} of $\cM$ is an element $e$ of $\cM$ such that $r(\{e\})=0$ and
a \first{bridge} is an element such that $r(\cM\setminus\{e\})=r(\cM)-1$.
A \first{separation $(A,B)$} is a partition of the elements of $\cM$
into two disjoint sets and a separation is called a \first{$k$-separation}
if $r(\cM)-r(A)-r(B)\le k-1$.

\subsection{Matroid representations}

Matroids do not generalize only the notion of graphs but they also generalize
the notion of linear independence of vectors. If $\FF$ is a (finite or infinite) field,
a mapping $\varphi:E\to\FF^d$ from the ground set $E$ of $\cM$
to a $d$-dimensional vector space over $\FF$ is a \first{representation}
of $\cM$ if a set $\{e_1,\ldots,e_k\}$ of elements of $\cM$
is independent in $\cM$ if and only if $\varphi(e_1),\ldots,\varphi(e_k)$
are linearly independent vectors in $\FF^d$. For a subset $E$
of the elements of $\cM$, $\varphi(E)$ denotes the linear subspace
of $\FF^d$ generated by the images of the elements of $E$.
In particular, $\dim\varphi(E)=r(E)$. Two representations $\varphi_1$ and
$\varphi_2$ of $\cM$ are isomorphic if there exists an isomorphism $\psi$
of vector spaces $\varphi_1(\cM)$ and $\varphi_2(\cM)$ such that
$\psi(\varphi_1(e))$ is a non-zero multiple of $\varphi_2(e)$ for every
element $e$ of $\cM$.

We next introduce additional notation for vector spaces over a field $\FF$.
If $U_1$ and $U_2$ are two linear subspaces of a vector space
over $\FF$, $U_1\cap U_2$ is the linear space
formed by all the vectors lying in both $U_1$ and $U_2$, and
$\cL(U_1\cup U_2)$ is the linear space formed by all the linear
combinations of the vectors of $U_1$ and $U_2$, i.e., the linear hull
of $U_1\cup U_2$.
Formally, $v\in\cL(U_1\cup U_2)$ if and only if there exist
$v_1\in U_1$, $v_2\in U_2$ and $\alpha_1,\alpha_2\in\FF$ such that
$v=\alpha_1 v_1+\alpha_2 v_2$.

\subsection{Branch-decompositions}
\label{subsect-branch}

A \first{branch-decomposition} of a matroid $\cM$ with ground set $E$ is a tree $T$ such that
\begin{itemize}
\item all inner nodes of $T$ have degree three, and
\item the leaves of $T$ one-to-one correspond to the elements of $\cM$.
\end{itemize}
An edge $e$ of $T$ splits $T$ into two subtrees and the elements
corresponding to the leaves of the two subtrees form a partition
$(E_1,E_2)$ of the ground set $E$. The \first{width of an edge}
$e$ is equal to $r(E_1)+r(E_2)-r(E)$, i.e., to the smallest $k$
such that $(E_1,E_2)$ is a $(k+1)$-separation of $\cM$. The
\first{width of the branch-decomposition} $T$ is the maximum width
of an edge $e$ of $T$. Finally, the
\first{branch-width}\index{branch-width} of a matroid is the
minimum width of a branch-decomposition of $\cM$ and is denoted by
$\bw(\cM)$.

\subsection{Decomposition width}
\label{subsect-decomp}

We now formally define our new width parameter.
A \first{$K$-decompo\-sition}, $K\ge 1$,
of a matroid $\cM$ is a rooted tree $\cT$ such that
\begin{itemize}
\item the leaves of $\cT$ one-to-one correspond to the elements of $\cM$ and
      carry the information whether the associated element of $\cM$ is a loop, and
\item each inner node $v$ of $\cT$ has exactly two children and is associated with two functions $\varphi_v$ and $\varphi_v^r$ such that
      $\varphi_v:\{0,\ldots,K\}^2\to\{0,\ldots,K\}$ and $\varphi_v^r:\{0,\ldots,K\}^2\to\NN$.
\end{itemize}
The tree $\cT$ represents the rank function of $\cM$ in the way we
now describe. For a subset $F$ of the ground set of $\cM$, we
label and color the vertices of $\cT$ in the following way: a leaf
of $\cT$ corresponding to an element in $F$ is colored with $1$
and other leaves are colored with $0$. The leaves colored with $1$
corresponding to non-loop elements of $\cM$ are labeled with $1$
and all other leaves are labeled with $0$. If $v$ is an inner node
of $\cT$ and its two children are labeled with $\lambda_1$ and
$\lambda_2$ and colored with $\gamma_1$ and $\gamma_2$, the node
$v$ is colored with $\varphi_v(\gamma_1,\gamma_2)$ and labeled
with the number
$\lambda_1+\lambda_2-\varphi_v^r(\gamma_1,\gamma_2)$. The rank of
$F$ is the label of the root of $\cT$. The \first{decomposition
width} of a matroid $\cM$ is the
smallest $K$ such that $\cM$ has a $K$-decomposition.

Let us give an intuitive explanation of the above procedure. The
colors represent types of different subsets of elements of $\cM$,
i.e., if $E_v$ is the set of elements assigned to leaves of a subtree
rooted at an inner vertex $v$, then those subsets of $E_v$ that get
the same color at $v$ are of the same type. The labels represent
the rank of subsets. Subsets of the same type can have different
labels (and thus ranks) but they behave in the same way in the
following sense: if $E_1$ and $E_2$ are elements assigned to
leaves of two subtrees rooted at children of $v$, then the rank of
the union of two subsets of $E_1$ with the same color and two
subsets of $E_2$ with the same color is equal to the sum of their
ranks decreased by the same amount.

In what follows, we will always assume that if $F$ is the empty
set, then all the nodes of $\cT$ are colored and labelled with $0$ and
consider this assumption to be part of the definition of a $K$-decomposition.

\section{Constructing decompositions}
\label{sect-construct}

In this section, we relate the decomposition width of matroids representable
over finite fields to their branch-width.

\begin{theorem}
\label{thm-construct}
Let $\cM$ be a matroid representable over a finite field $\FF$ of order $q$.
If the branch-width of $\cM$ is at most $k\ge 1$,
then the decomposition width of $\cM$ is at most $K=\frac{q^{k+1}-q(k+1)+k}{(q-1)^2}$
Moreover, if a branch-decomposition of $\cM$ with width $k$ and its representation over $\FF$ are given,
then a $K$-decomposition of $\cM$ can be constructed in time $O(n^{1+\alpha})$
where $n$ is the number of elements of $\cM$ and $\alpha$ is the exponent from the matrix multiplication algorithm.
\end{theorem}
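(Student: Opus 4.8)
The plan is to build the $K$-decomposition directly from the given branch-decomposition $T$, after first rooting $T$ by subdividing an arbitrary edge and attaching a root. For each edge $e$ of the rooted tree, with $E_e$ the set of elements below $e$, the separation $(E_e, E\setminus E_e)$ has width at most $k$, so $r(E_e) + r(E\setminus E_e) - r(E) \le k$; in particular the subspace $\varphi(E_e) \cap \varphi(E\setminus E_e)$ of the given representation has dimension at most $k$. The key idea is that, to compute the rank of an arbitrary set $F$, the only information about $F \cap E_e$ that matters for the "outside" is how $\varphi(F \cap E_e)$ intersects this boundary subspace of dimension $\le k$ — more precisely, the \emph{flat} (up to scalar, i.e.\ the point of the projective space) spanned inside $\FF^k$ after projecting onto a fixed complement. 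So I would define the color of a subtree rooted at $v$ to be an index of the subspace $\varphi(F\cap E_v)\cap \varphi(E\setminus E_v)$, regarded as a subspace of the ambient $\le k$-dimensional boundary space of the edge above $v$. Counting these subspaces gives the bound on $K$: the number of subspaces of all dimensions $0,1,\dots,k$ of a $k$-dimensional vector space over $\GF(q)$ is $\sum_{i=0}^{k}\binom{k}{i}_q$, and a short manipulation of the Gaussian binomials (bounding $\binom{k}{i}_q \le q^{i(k-i)}$ is too weak; one uses the cruder count $\binom{k}{i}_q \le \frac{q^{k}-1}{q-1}\cdot\frac{q^{k-1}-1}{q-1}\cdots$, or more cleanly counts ordered spanning tuples) collapses to the closed form $K = \frac{q^{k+1}-q(k+1)+k}{(q-1)^2}$. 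I expect verifying this exact identity to be the first routine-but-fiddly step.

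Next I would define the two functions at each inner node $v$ with children $u_1,u_2$. Given the colors $\gamma_1,\gamma_2$ — i.e.\ the boundary subspaces $W_1 = \varphi(F\cap E_{u_1})\cap\varphi(E\setminus E_{u_1})$ and $W_2 = \varphi(F\cap E_{u_2})\cap\varphi(E\setminus E_{u_2})$, presented in the coordinate systems attached to the two child-edges — one recovers $\varphi(F\cap E_v)$ "modulo the part irrelevant outside $E_v$" as $\cL(W_1\cup W_2)$ transported into the coordinate system of the edge above $v$, and then intersected with that edge's boundary space; this is a purely linear-algebraic operation on subspaces of bounded-dimensional spaces over $\FF$, so it is well-defined and yields $\varphi_v(\gamma_1,\gamma_2)$. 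For $\varphi_v^r$ I set it to be the \emph{rank drop}
\[
\varphi_v^r(\gamma_1,\gamma_2) = \dim W_1 + \dim W_2 - \dim \cL(W_1\cup W_2),
\]
which by inclusion–exclusion equals $\dim(W_1\cap W_2)$. The correctness claim to prove by induction on the tree is: the label computed at $v$ equals $r(F\cap E_v)$, i.e.\ $\lambda_1+\lambda_2-\varphi_v^r(\gamma_1,\gamma_2) = r(F\cap E_{u_1})+r(F\cap E_{u_2}) - \dim(W_1\cap W_2)$. The heart of this is the modular (submodularity-to-equality) identity $r(X\cup Y) = r(X)+r(Y) - \dim(\varphi(X)\cap\varphi(Y))$ together with the observation that $\varphi(X)\cap\varphi(Y)$, for $X = F\cap E_{u_1}$ and $Y = F\cap E_{u_2}$, is unchanged if we replace $\varphi(X),\varphi(Y)$ by their traces $W_1,W_2$ on the relevant boundary spaces — because any vector common to $\varphi(X)$ and $\varphi(Y)$ already lies in $\varphi(E\setminus E_{u_1})$ and in $\varphi(E\setminus E_{u_2})$, hence in those boundary subspaces.

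The main obstacle, and the step deserving the most care, is exactly this last \emph{locality} argument: one must be sure that the colors as defined carry \emph{enough} information, i.e.\ that $W_1\cap W_2$ and the new boundary subspace $\cL(W_1\cup W_2)\cap(\text{boundary above }v)$ depend only on $\gamma_1,\gamma_2$ and not on which representative sets $F\cap E_{u_i}$ produced them, and that the coordinate-change maps between the boundary spaces of adjacent edges are well-defined once we fix, at the start, a basis of each boundary subspace $\varphi(E_e)\cap\varphi(E\setminus E_e)$. Making this precise requires fixing, for every edge $e$, a complement and a projection so that "the subspace spanned by $F\cap E_e$ as seen from outside" becomes a canonical subspace of a fixed $\GF(q)^{\dim \le k}$; once these choices are pinned down, everything below an edge is genuinely a function of the color alone. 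Finally, for the running-time claim: precomputing all the $\varphi_v,\varphi_v^r$ tables requires, at each of the $O(n)$ nodes, iterating over the $O(K^2)$ color pairs and performing a constant number of linear-algebra operations on $\le k \times \le k$ matrices over $\FF$ — Gaussian elimination or, where one prefers, matrix multiplication — giving $O(n)$ nodes times $O(1)$ (constant since $k,q,K$ are fixed) work, with the $n^{\alpha}$ term entering only if one insists on computing the initial representation-restrictions or boundary bases via matrix multiplication over the whole $n$-column matrix; I would present the bound as $O(n^{1+\alpha})$ to absorb that one global linear-algebra preprocessing pass and not optimize further, consistent with the paper's stated policy.
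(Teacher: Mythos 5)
Your construction is essentially the paper's own proof: root the branch-decomposition at a subdivided edge, attach to each node $v$ the boundary space $D_v=W_v\cap W'_v$ of dimension at most $k$, declare the color of $F\cap E_v$ to be the subspace $\varphi(F\cap E_v)\cap D_v$, set $\varphi^r_v(\gamma_1,\gamma_2)=\dim\Gamma_1+\dim\Gamma_2-\dim\cL(\Gamma_1\cup\Gamma_2)$ and let $\varphi_v(\gamma_1,\gamma_2)$ be the color of $D_v\cap\cL(\Gamma_1\cup\Gamma_2)$, and derive correctness from the observation that $\varphi(F\cap E_{v_1})\cap\varphi(F\cap E_{v_2})$ already lies in $D_{v_1}\cap D_{v_2}$; the running-time accounting is also the same. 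The one step where you depart from the paper is the count of colors, and there the identity you propose to verify is false: the total number of subspaces of a $k$-dimensional space over $\GF(q)$ (the sum of the Gaussian binomial coefficients) does \emph{not} collapse to $1+\frac{q^{k+1}-q(k+1)+k}{(q-1)^2}$; already for $k=3$ and $q=2$ the former is $16$ while the latter is $12$. Consequently, counting \emph{all} subspaces of $D_v$, as you suggest, proves the theorem only with a larger constant $K$ than the one stated. The paper itself disposes of this step by asserting, without further justification, that at most $K+1=1+\sum_{i=1}^k\frac{q^i-1}{q-1}$ subspaces of $D_v$ need to be colored; this assertion (rather than the Galois number computation) is what the stated value of $K$ rests on, so this is the one point you would need to resolve differently from your sketch.
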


\begin{proof}
Let $\cT_b$ be the branch-decomposition of $\cM$ of width at most
$k$. Subdivide an arbitrary edge of $\cT_b$ and root the resulting
tree $\cT$ at the new vertex. We now have to construct the
functions $\varphi_v$ and $\varphi_v^r$. Fix a representation of
$\cM$ over $\FF$ and let $w_e$ be a vector representing an element
$e$ of $\cM$. For a node $v$ of $\cT$, let $W_v$ be the linear
hull of all the vectors $w_e$ representing the elements $e$
corresponding to the leaves of the subtree rooted at $v$ and
$W'_v$ be the linear hull of the remaining vectors $w_e$, i.e.,
vectors $w_e$ for the elements $e$ not corresponding to the leaves
of the subtree. Finally, let $D_v$ be the intersection of $W_v$
and $W'_v$. Since $\cT_b$ is a branch-decomposition of width at
most $k$, the dimension of the subspace $D_v=W_v\cap W'_v$ is at
most $k$ for every inner node $v$.

Each of at most $K+1=1+\sum_{i=1}^k\frac{q^i-1}{q-1}$ subspaces of
$D_v$ is associated with one of the colors $\{0,\ldots,K\}$ in
such a way that the trivial subspace of dimension $0$ is
associated with the color $0$. If $v$ is a leaf associated with
non-trivial subspace $D_v$ (the corresponding element is neither
loop nor co-loop), then the subspace $D_v$ itself is associated
with the color $1$. Let $v$ be an inner node of $\cT$ with
children $v_1$ and $v_2$. For $\gamma_i\in\{0,\ldots,K\}$, $i=1,2$, let
$\Gamma_i$ be the subspace of $W_i$ colored with $\gamma_i$. The function
$\varphi_v(\gamma_1,\gamma_2)$ is equal to the color $z$ of the subspace
$D_v\cap\cL(\Gamma_1\cup \Gamma_2)$ where $\cL(\Gamma_1\cup \Gamma_2)$ is the linear
hull of $\Gamma_1\cup \Gamma_2$ and $\varphi_v^r(\gamma_1,\gamma_2)$ is equal
$$\dim \Gamma_1+\dim \Gamma_2-\dim\cL(\Gamma_1\cup \Gamma_2)\;\mbox{.}$$
If there is no subspace of $D_{v_i}$ with color $\gamma_i$ for $i=1$ or $i=2$, then
the functions $\varphi_v(\gamma_1,\gamma_2)$ and $\varphi_v^r(\gamma_1,\gamma_2)$ are
equal to $0$. Finally, if $v$ is the root of the tree $\cT$, then
the function $\varphi_v^r$ is defined in the same way and
$\varphi_v$ is defined to be constantly equal to $0$.

We now have to verify that the constructed decomposition of $\cM$
represents the matroid $\cM$. Let $F$ be a subset of elements of
$\cM$ and consider the coloring and the labeling of the nodes of
$\cM$ as in the definition of a $K$-decomposition. We will prove
that the label of each node $v$ of $\cT$ is equal to the dimension
of the linear hull $X_v$ of the vectors $w_e$ with $e \in F_v$ where
$F_v$ is the set of the elements of $F$ corresponding to the
leaves of the subtree rooted at $v$.

The label of each leaf $v$ is equal to the dimension of $X_v$
since the leaves corresponding to
elements $e$ of $F$ that are not loops are labeled with $1$ and
the other leaves are labeled with $0$. Let $v$ be a node of $\cT$
with children $v_1$ and $v_2$. Let us compute the dimension of
$X_v$:
\begin{eqnarray*}
\dim X_v & = & \dim X_{v_1}+\dim X_{v_2}-\dim X_{v_1}\cap X_{v_2} \\
         & = & \dim X_{v_1}+\dim X_{v_2}-\dim X_{v_1}\cap X_{v_2}\cap D_{v_1}\cap D_{v_2} \\
         & = & \dim X_{v_1}+\dim X_{v_2}+\\
     & & \dim X_{v_1}\cap D_{v_1}+\dim X_{v_2}\cap D_{v_2}-\\
     & & \dim \cL\left((X_{v_1}\cap D_{v_1})\cup(X_{v_2}\cap D_{v_2})\right)
\end{eqnarray*}
since $X_{v_1}\cap X_{v_2}\subseteq W_{v_i}\cap W'_{v_i}=D_{v_i}$. As
$$\varphi_v^r(\gamma_1,\gamma_2)=\dim X_{v_1}\cap D_{v_1}+\dim X_{v_2}\cap D_{v_2}-\dim \cL\left((X_{v_1}\cap D_{v_1})\cup(X_{v_2}\cap D_{v_2})\right)$$
where $\gamma_i$ is the color of $X_{v_i}\cap D_{v_i}$, the label of $v$ is equal to the dimension of $X_v$.
This finishes the proof that the decomposition $\cT$ represents the matroid $\cM$.

It remains to explain how to construct the decomposition $\cT$ within the claimed time. The tree $\cT$ has $n-1$ non-leaf nodes $v$ and
we need time $O(n^\alpha)$ to compute the subspaces $W_v$, $W'_v$ and $D_v$. Since the dimension of $D_v$ is bounded by $k$,
associating colors $\{0,\ldots,K\}$ with the subspaces of $D_v$ and defining the functions $\varphi_v$ and $\varphi^r_v$
requires time bounded by $K$ and $q$ for each node $v$.
\end{proof}

Theorem~\ref{thm-construct} and
the cubic-time algorithm of Hlin{\v e}n{\'y} and Oum~\cite{bib-hlineny-fpt}
for computing branch-decompositions of matroids with bounded branch-width yield the following:

\begin{corollary}
\label{cor-construct}
Let $\cM$ be a matroid represented over a finite field $\FF$ of order $q$.
For every $k\ge 1$, there exists an algorithm running in time $O(n^{1+\alpha})$,
where $n$ is the number of elements of $\cM$ and $\alpha$ is the exponent from the matrix multiplication algorithm,
that either outputs that the branch-width of $\cM$ is bigger than $k$ or
construct a $K$-decomposition of $\cM$ for $K\le\frac{q^{k+1}-q(k+1)+k}{(q-1)^2}$.
\end{corollary}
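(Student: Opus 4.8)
The plan is to compose two black boxes: the branch-decomposition algorithm of Hlin{\v e}n{\'y} and Oum~\cite{bib-hlineny-fpt} and the construction from the proof of Theorem~\ref{thm-construct}. First I would invoke the cubic-time algorithm of~\cite{bib-hlineny-fpt}, which, given the represented matroid $\cM$ and the parameter $k$, either produces a branch-decomposition $\cT_b$ of $\cM$ of width at most $k$ or certifies that $\bw(\cM)>k$. In the latter case the algorithm of the corollary reports that the branch-width exceeds $k$ and halts. In the former case I would feed $\cT_b$ together with the given representation of $\cM$ over $\FF$ into the construction described in the proof of Theorem~\ref{thm-construct}, obtaining a $K$-decomposition of $\cM$ with $K=\frac{q^{k+1}-q(k+1)+k}{(q-1)^2}$ in time $O(n^{1+\alpha})$.

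It then remains to bound the total running time. The Hlin{\v e}n{\'y}--Oum step runs in time $O(n^3)$ (with the hidden constant depending on $k$ and $q$), and the conversion of Theorem~\ref{thm-construct} runs in time $O(n^{1+\alpha})$. Since the matrix-multiplication exponent satisfies $\alpha\ge 2$, we have $n^3=O(n^{1+\alpha})$, so the two phases together still run in time $O(n^{1+\alpha})$, as claimed; note that the degree $1+\alpha$ of the polynomial depends on neither $k$ nor $q$, only the hidden constant does.

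There is essentially no hard step here --- the corollary is a routine combination of the two cited results --- but two points deserve care. First, both subroutines genuinely need an explicit representation of $\cM$ over $\FF$ rather than a rank oracle: the Hlin{\v e}n{\'y}--Oum algorithm is stated for represented matroids, and the functions $\varphi_v,\varphi_v^r$ built in Theorem~\ref{thm-construct} are defined via intersections and linear hulls of the subspaces $W_v$, $W'_v$ and $D_v$ determined by that representation. Second, one should check that the branch-decomposition produced by~\cite{bib-hlineny-fpt} is in exactly the format required as input by Theorem~\ref{thm-construct}, namely a tree with internal nodes of degree three whose leaves are in bijection with $E(\cM)$; this is immediate from the definition of a branch-decomposition in Section~\ref{subsect-branch}, so no reformatting is needed. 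With these observations the proof is complete.
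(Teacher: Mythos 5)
Your proposal matches the paper exactly: the corollary is obtained by composing the cubic-time branch-decomposition algorithm of Hlin{\v e}n{\'y} and Oum with the construction of Theorem~\ref{thm-construct}, and your added remark that $n^3=O(n^{1+\alpha})$ because $\alpha\ge 2$ correctly justifies the stated overall running time. No issues.
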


\section{Verifying decompositions}
\label{sect-verify}

A $K$-decomposition fully describes a considered matroid $\cM$
through functions $\varphi_v$ and $\varphi^r_v$ for internal
nodes $v$ of the decomposition. Clearly, not all possible
choices of $\varphi_v$ and $\varphi^r_v$ give rise to a decomposition
representing a matroid and it is natural to ask whether we can
efficiently test that a $K$-decomposition represents a matroid.
We answer this equation in the affirmative way in the next theorem.

\begin{theorem}
\label{thm-verify}
For every $K$-decomposition $\cT$ with $n$ leaves, it can be tested
in time $O(K^8n)$ whether $\cT$ corresponds to a matroid.
\end{theorem}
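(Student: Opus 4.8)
The plan is to check, node by node from the leaves up, that the functions $\varphi_v,\varphi^r_v$ are consistent with the axioms of a matroid rank function, by maintaining for every node $v$ a finite catalogue of the ``types'' of subsets of $E_v$ (the ground set below $v$) together with the rank value and, crucially, enough extra rank data to predict how two types from the two children combine. The key observation is that a $K$-decomposition is a compact encoding of the rank function, so it suffices to verify Whitney's axioms (or, more conveniently, the submodular/monotone rank-function axioms $r(\emptyset)=0$; $r(F)\le r(F\cup\{e\})\le r(F)+1$; $r(F\cup G)+r(F\cap G)\le r(F)+r(G)$) locally. First I would, for each node $v$ and each color $\gamma\in\{0,\dots,K\}$ that actually occurs, fix one witness subset $F_{v,\gamma}\subseteq E_v$ realizing that color; since colors are assigned only to at most $K+1$ subspaces this is just a bookkeeping step and there are at most $K+1$ witnesses per node. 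For a leaf the check is trivial (a loop leaf has color/label $0$ for $F=\{e\}$, a non-loop leaf has label $1$), and the all-zero convention for $F=\emptyset$ is checked once at every node.

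Next I would propagate upward. At an inner node $v$ with children $v_1,v_2$, the definition tells us the color and label of a set $F=F_1\cup F_2$ from the colors $\gamma_1,\gamma_2$ and labels $\lambda_1,\lambda_2$ of $F_1\subseteq E_{v_1}$, $F_2\subseteq E_{v_2}$ via $\varphi_v(\gamma_1,\gamma_2)$ and $\lambda_1+\lambda_2-\varphi^r_v(\gamma_1,\gamma_2)$. To verify monotonicity and submodularity I need to be able to compare $r(F)$ and $r(F\cup\{e\})$, and $r(F\cup G)$ against $r(F),r(G),r(F\cap G)$, where the new element $e$ or the set $G$ may sit in either subtree. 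The essential point is that the color of a set determines everything about how it interacts with sets in the other subtree, so I maintain, for each node and each pair of colors $(\gamma,\gamma')$ occurring there, the difference in labels between the witnesses $F_{v,\gamma}$ and $F_{v,\gamma'}$ only to the extent it is needed — more precisely, I track for every color $\gamma$ at $v$ the label $\ell_v(\gamma)$ of its witness and, for the submodularity test, the color and label of unions $F_{v,\gamma}\cup F_{v,\gamma'}$ and intersections, all of which can be computed by a single bottom-up pass using $\varphi_v,\varphi^r_v$. Enumerating over pairs (or small tuples) of colors at each of the $n-1$ internal nodes, with a constant number of $(K{+}1)$-fold loops, gives the claimed $O(K^8 n)$: one factor $n$ for the nodes and up to eight factors $K$ for iterating over the colors of $F_1,F_2,G_1,G_2$ (and their interactions) at a node. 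If any local inequality fails, output ``not a matroid''; if every local check passes, the labels computed by $\cT$ satisfy all rank axioms on every set and hence $\cT$ encodes a genuine matroid.

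The soundness direction (if $\cT$ encodes a matroid then all local checks pass) is immediate since the rank function is then monotone and submodular and these properties are inherited by the witnesses. The content is in the completeness direction: I must argue that passing the finitely many local checks over representative witnesses forces the global rank axioms for \emph{all} subsets. This is where the semantics of colors does the work — the defining property of a $K$-decomposition is precisely that two subsets of $E_v$ with the same color behave identically with respect to everything outside, so it is enough to verify the axioms on one witness per color class at each node, and an induction on the tree lifts the local verification to a global one. I expect the main obstacle to be exactly this bookkeeping: pinning down the minimal set of ``combination data'' (colors/labels of pairwise unions and intersections of witnesses, propagated upward) that is simultaneously (i) sufficient to certify monotonicity and submodularity for arbitrary sets and (ii) computable within the $O(K^8n)$ budget, and then proving the inductive lifting statement cleanly. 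Once the right invariant is isolated, the algorithm and its analysis are routine.
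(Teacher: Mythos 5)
Your overall strategy --- verify submodularity by a single bottom-up pass over the decomposition tree with an $O(K^8)$ enumeration per node --- is the right one, and it matches the paper's. But the specific mechanism you propose, fixing one witness subset $F_{v,\gamma}$ per color $\gamma$ at each node and checking the inequalities only on these witnesses (and on their pairwise unions and intersections), does not work, and the point where it breaks is exactly the ``completeness direction'' you yourself flag as the remaining obstacle. The color of a subset at $v$ determines only how that subset interacts with sets \emph{outside} $E_v$ (that is the content of Lemma~\ref{lm-aux}); it does not determine the subset's label, and two subsets with the same color can have different ranks. More importantly, the submodularity defect $\lambda(A)+\lambda(B)-\lambda(A\cup B)-\lambda(A\cap B)$ for $A,B\subseteq E_v$ is governed by how $A$ and $B$ overlap \emph{inside} $E_v$, and it is not a function of the quadruple of colors of $A$, $B$, $A\cap B$, $A\cup B$ at $v$: writing $A=A_1\cup A_2$ and $B=B_1\cup B_2$ along the two children, the defect at $v$ equals the sum of the defects at the children plus a correction depending only on the children's colors, so it accumulates contributions from every level of the tree and genuinely varies within a fixed color quadruple. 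Checking one witness pair per quadruple can therefore succeed even though some other pair realizing the same colors violates submodularity.

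The paper's fix is to replace ``one witness per class'' by an optimization: for each node $v$ and each quadruple $[\gamma_A,\gamma_B,\gamma_\cap,\gamma_\cup]$ it computes the \emph{minimum} defect $\mu_v(\gamma_A,\gamma_B,\gamma_\cap,\gamma_\cup)$ over \emph{all} pairs $(A,B)$ realizing that quadruple, via a dynamic program whose recurrence at an inner node minimizes over all $(K+1)^4\times(K+1)^4$ pairs of child quadruples compatible under $\varphi_v$ --- this is the true source of the $K^8$ (the eight color indices are the four quadruple slots at each of the two children, not colors of witnesses). Submodularity for all pairs of subsets then reduces to non-negativity of the finitely many values of $\mu$ at the root. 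If you want to repair your write-up, the invariant you are looking for is precisely this family of minima; the witnesses can be discarded. (Your additional monotonicity and unit-increase checks are harmless, but note the paper rests the correctness argument on submodularity together with the empty-set normalization alone.)
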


\begin{proof}
One of the equivalent definitions of a matroid is given
by the submodularity of its rank function, i.e.,
a function $r:2^E\to\NN$ is a rank function of a matroid if and only if 
\begin{equation}
r(A\cup B)+r(A\cap B)\le r(A)+r(B)\label{eq-sub}
\end{equation}
for every subsets $A$ and $B$ of $E$.
We will verify using the $K$-decomposition that the inequality (\ref{eq-sub})
holds for all $A,B\subseteq E$.

We first check that for the empty set, all nodes of $\cT$ are colored with $0$ and
labelled with $0$. If this is not the case, we reject.
Let $E_v$ be now the set of the elements assigned to the leaves of a subtree of $\cT$
rooted at $v$.
In order to achieve our goal, we will compute for every node $v$ of $\cT$ and
every quadruple $[\gamma_A,\gamma_B,\gamma_\cap,\gamma_\cup]\in\{0,\ldots,K\}^4$
the minimum possible difference
$$\lambda(A)+\lambda(B)-\lambda(A\cup B)-\lambda(A\cap B)$$
where $A$ is a subset of $E_v$ colored at $v$ with $\gamma_A$ and
$B$ a subset colored with $\gamma_B$ such that the color of $A\cup B$ is $\gamma_\cup$ and
the color of $A\cap B$ is $\gamma_\cap$ and $\lambda(A)$, $\lambda(B)$, $\lambda(A\cup B)$ and $\lambda(A\cap B)$
are the labels corresponding to the subsets $A$, $B$, $A\cup B$ and $A\cap B$ at $v$
when the coloring and labelling procedure from the definition of a $K$-decomposition
is applied. This minimum is further denoted by $\mu_v(\gamma_A,\gamma_B,\gamma_\cap,\gamma_\cup)$.
If there is no pair of subsets $A$ and $B$ such that the color of $A$ at $v$ is $\gamma_A$,
the color of $B$ is $\gamma_B$, the color of $A\cup B$ is $\gamma_\cup$ and the color of $A\cap B$ is $\gamma_\cap$,
then $\mu_v(\gamma_A,\gamma_B,\gamma_\cap,\gamma_\cup)=\infty$.

Let us see how the values of $\mu$ can be computed for the nodes of $\cT$ in the direction
from the leaves towards the root of $\cT$. If $v$ is a leaf of $\cT$,
then the function $\mu_v$ is equal to zero for the quadruples $[0,0,0,0]$, $[1,0,0,1]$,
$[0,1,0,1]$ and $[1,1,1,1]$ and is equal to $\infty$ for all other quadruples.
If $v$ is an inner node of $\cT$ with children $v'$ and $v''$,
then $\mu_v(\gamma_A,\gamma_B,\gamma_\cap,\gamma_\cup)$ is equal to the minimum of 
$$
\mu_{v'}(\gamma'_A,\gamma'_B,\gamma'_\cap,\gamma'_\cup)+
\mu_{v''}(\gamma''_A,\gamma''_B,\gamma''_\cap,\gamma''_\cup)-$$
$$\varphi_v^r(\gamma'_A,\gamma''_A)
-\varphi_v^r(\gamma'_B,\gamma''_B)
+\varphi_v^r(\gamma'_\cap,\gamma''_\cap)
+\varphi_v^r(\gamma'_\cup,\gamma''_\cup)$$
where the minimum is taken over all pairs of quadruples $[\gamma'_A,\gamma'_B,\gamma'_\cap,\gamma'_\cup]$ and
$[\gamma''_A,\gamma''_B,\gamma''_\cap,\gamma''_\cup]$ such that
$\varphi_v(\gamma'_A,\gamma''_A)=\gamma_A$,
$\varphi_v(\gamma'_B,\gamma''_B)=\gamma_B$,
$\varphi_v(\gamma'_\cap,\gamma''_\cap)=\gamma_\cap$ and
$\varphi_v(\gamma'_\cup,\gamma''_\cup)=\gamma_\cup$.
Here, we assume that $\infty+k=\infty$ and $k<\infty$ for any $k\in\NN$.
It is straightforward to verify that the function $\mu_v$ computed
in this way is equal to the claimed minimum.

The function $\mu_v$ for the root $v$ of $\cT$ determines whether the $K$-de\-compo\-si\-tion
represents a matroid. If the value of $\mu_v$ for any quadruple is negative,
then there exists subsets $A$ and $B$ violating (\ref{eq-sub}) and thus
the rank function given by $\cT$ is not submodular.
On the other hand, if all the values of $\mu_v$ are non-negative or
equal to $\infty$ for all quadruples, then (\ref{eq-sub}) holds
for every two subsets $A$ and $B$. Consequently, $\cT$ represents a matroid.

It remains to estimate the running time of the algorithm.
Clearly, the minimums can be computed in time $O(K^8)$
for every inner node of $\cT$. We conclude that
the running time of the whole algorithm is $O(K^8n)$ as claimed.
\end{proof}

\section{Computing the Tutte polynomial}
\label{sect-tutte}

One of the classical polynomials associated to matroids is the Tutte polynomial. There are several
equivalent definitions of this polynomial but we provide here only the one we use.
For a matroid $\cM$ with ground set $E$,
the \first{Tutte polynomial} $T_{\cM}(x,y)$ is equal to
\begin{equation}
T_{\cM}(x,y)=\sum_{F\subseteq E} (x-1)^{r(E)-r(F)}(y-1)^{|F|-r(F)}
\label{eqd-14}
\end{equation}
The Tutte polynomial is an important algebraic object associated to a matroid.
Some of the values of $T_{\cM}(x,y)$ have a combinatorial interpretation;
as a simple example,
the value $T_{\cM}(1,1)$ is equal to the number of bases of a matroid $\cM$.
We show that the Tutte polynomial can be computed and evaluated
in time $O(K^2n^3r^2)$ for $n$-element matroids of rank $r$
given by the $K$-decomposition. The part of the proof for computing
the Tutte polynomial reflects the main motivation behind the definition
of matroid decompositions given in the previous section.

\begin{theorem}
\label{thm-tutte}
Let $K$ be a fixed integer. The Tutte polynomial of
an $n$-element matroid $\cM$ given by its $K$-decomposition can be
computed in time $O(K^2n^3r^2)$ and evaluated in time $O(K^2n)$
where $r$ is the rank of $\cM$ (under the assumption that summing
and multiplying $O(n)$-bit numbers can be done in constant time).
\end{theorem}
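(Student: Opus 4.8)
The plan is to compute, by a dynamic program running from the leaves of $\cT$ towards the root, the numbers $N(\rho,\phi)=\#\{F\subseteq E:\ r(F)=\rho,\ |F|=\phi\}$, and then to read off $T_{\cM}(x,y)$ from~(\ref{eqd-14}) by replacing each $F$ with $r(F)=\rho$, $|F|=\phi$ by $(x-1)^{r-\rho}(y-1)^{\phi-\rho}$; for the value $T_{\cM}(x_0,y_0)$ at a single point a lighter one-variable pass will do. Throughout I use that the label of a node $v$ of $\cT$ for a set $F$ is $r(F\cap E_v)$, where $E_v$ is the set of elements at the leaves below $v$ (for the root this is the defining property of a $K$-decomposition, and for inner nodes it is the ``labels represent ranks'' statement of the intuitive explanation following the definition; see the last paragraph for why we may assume it), so that every label lies in $\{0,\ldots,r\}$.

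For a node $v$, a colour $\gamma\in\{0,\ldots,K\}$ and integers $\ell\in\{0,\ldots,r\}$, $m\in\{0,\ldots,|E_v|\}$, let $c_v(\gamma,\ell,m)$ be the number of sets $F\subseteq E_v$ that receive colour $\gamma$ and label $\ell$ at $v$ and satisfy $|F|=m$. For a leaf these are read off the definition, distinguishing a loop from a non-loop: a non-loop in $F$ forces the triple $(1,1,1)$, a loop in $F$ forces $(1,0,1)$, and absence forces $(0,0,0)$. For an inner node $v$ with children $v_1$ and $v_2$,
$$c_v(\gamma,\ell,m)=\sum c_{v_1}(\gamma_1,\ell_1,m_1)\,c_{v_2}(\gamma_2,\ell_2,m_2),$$
the sum over all $\gamma_1,\ell_1,m_1,\gamma_2,\ell_2,m_2$ with $\varphi_v(\gamma_1,\gamma_2)=\gamma$, $\ell_1+\ell_2-\varphi_v^r(\gamma_1,\gamma_2)=\ell$ and $m_1+m_2=m$; this is precisely the colouring and labelling rule, so an easy induction from the leaves shows $c_v$ is as described, and $N(\rho,\phi)=\sum_\gamma c_{\mathrm{root}}(\gamma,\rho,\phi)$. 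Each child has $O(Krn)$ nonzero triples, so one inner node is processed in $O(K^2r^2n^2)$ time and the whole tree in $O(K^2n^3r^2)$; expanding $\sum_{\rho,\phi}N(\rho,\phi)(x-1)^{r-\rho}(y-1)^{\phi-\rho}$ into a coefficient list costs only $O(r^2n^2)$ more (the integers involved have $O(n)$ bits, matching the model in the statement). This yields the Tutte polynomial within the claimed bound.

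For evaluation at a fixed $(x_0,y_0)$ with $x_0\neq 1\neq y_0$, set $p=((x_0-1)(y_0-1))^{-1}$ and $q=y_0-1$ and compute, by one bottom-up pass (with $g_v$ read directly off the leaf in the base case), $g_v(\gamma)=\sum_{F\subseteq E_v,\ \mathrm{col}_v(F)=\gamma}p^{\,r(F\cap E_v)}q^{\,|F|}$ using
$$g_v(\gamma)=\sum_{\varphi_v(\gamma_1,\gamma_2)=\gamma}p^{-\varphi_v^r(\gamma_1,\gamma_2)}\,g_{v_1}(\gamma_1)\,g_{v_2}(\gamma_2),$$
which follows from $r(F\cap E_v)=r(F\cap E_{v_1})+r(F\cap E_{v_2})-\varphi_v^r(\gamma_1,\gamma_2)$ and additivity of $|F|$; this costs $O(K^2)$ per node, hence $O(K^2n)$ in total. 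Since $(x-1)^{r-r(F)}(y-1)^{|F|-r(F)}=(x-1)^r((x-1)(y-1))^{-r(F)}(y-1)^{|F|}$, we get $T_{\cM}(x_0,y_0)=(x_0-1)^r\sum_\gamma g_{\mathrm{root}}(\gamma)$ once $r$ is known (one trivial pass). The finitely many lines $x_0=1$ and $y_0=1$ need only minor changes: on $y_0=1$ exactly the independent $F$ survive, which are the sets whose label equals their size at every node, i.e.\ whose computation uses only transitions with $\varphi_v^r(\gamma_1,\gamma_2)=0$, so the restricted pass still runs in $O(K^2n)$; the line $x_0=1$ is symmetric, and $T_{\cM}(1,1)$, the number of bases, is obtained by an analogous restricted pass. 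All of this stays within the stated bounds.

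The step I expect to be the real obstacle is the one used tacitly above: that in a $K$-decomposition the label of an inner node $v$ for $F$ is genuinely $r(F\cap E_v)$, equivalently never exceeds $r$. This is exactly what lets the label range be $\{0,\ldots,r\}$ instead of $\{0,\ldots,n\}$, and hence is what the factor $r^2$ in the running time relies on. I would establish it by a short induction from the root-label property, the empty-set convention and the nonnegativity of the $\varphi_v^r$; if an arbitrary $K$-decomposition is allowed to be ``loose'', the alternative is to first normalise it in linear time --- recomputing the functions $\varphi_v^r$ so that each inner label becomes $r(F\cap E_v)$ while the root label is unchanged --- and to argue that this is well defined (the colours already record, for the two sides of each inner node, which subsets combine with which rank drop) and keeps the new values nonnegative (the old labels never undershoot the rank). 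Everything else is the routine tree dynamic programming described above.
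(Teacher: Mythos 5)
Your proposal is correct and follows essentially the same route as the paper: a bottom-up dynamic program counting, for each node, the subsets of $E_v$ by (colour, size, rank/label) to extract the coefficients, and a collapsed one-variable version summing weighted contributions per colour for evaluation, with identical recurrences and the same $O(K^2n^3r^2)$ and $O(K^2n)$ accounting. The two points you flag --- the tacit identification of inner labels with $r(F\cap E_v)$ and the degenerate evaluation lines $x=1$, $y=1$ --- are indeed glossed over by the paper's own proof, and your treatment of them is a welcome tightening rather than a departure.
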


\begin{proof}
First, we give an algorithm for computing the Tutte polynomial,
i.e., an algorithm that computes the coefficients in the polynomial.
Let $\cT$ be a $K$-decomposition of $\cM$ and let $E_v$ be the elements
of $\cM$ corresponding to leaves of the subtree rooted at a vertex $v$.
For every node $v$ of $\cM$ and every triple $[\gamma,n',r']$,
$0\le \gamma\le K$, $0\le n'\le n$ and $0\le r'\le r$,
we compute the number of subsets $F$ of $E_v$
such that the color assigned to $v$ for $F$ is $\gamma$, $|F|=n'$ and
the rank of $F$ is $r'$. These numbers will be denoted by $\mu_v(\gamma,n',r')$.

We compute the numbers $\mu_v(\gamma,n',r')$ from the leaves to the root of $\cT$.
If $v$ is a leaf of $\cT$, then $\mu_v(0,1,0)$ is equal to $2$ if $v$ corresponds to a loop of $\cM$ and
it is equal to $1$, otherwise. In the latter case, i.e., if $v$ corresponds to a non-loop,
$\mu_v(1,1,1)$ is also equal to $1$. All the other values of $\mu_v$ are equal to $0$.

Let $v$ be a node with children $v_1$ and $v_2$. If $F_i$ is a subset of $E_{v_i}$
with color $\gamma_i$ such that $n_i=|F_i|$ and $r_i=r(F_i)$, then $F_1\cup F_2$
is a subset of $E_v$ with color $\varphi_v(\gamma_1,\gamma_2)$, with $n_1+n_2$ elements and
the rank $r_1+r_2-\varphi_v^r(\gamma_1,\gamma_2)$. In other words, it holds that
\begin{equation}
\mu_v(\gamma,n',r')=\sum_{\gamma_1,n_1,r_1,\gamma_2,n_2,r_2}\mu_{v_1}(\gamma_1,n_1,r_1)\mu_{v_2}(\gamma_2,n_2,r_2)\label{eqd-15}
\end{equation}
where the sum is taken over six-tuples $(\gamma_1,n_1,r_1,\gamma_2,n_2,r_2)$ such that
$\gamma=\varphi_v(\gamma_1,\gamma_2)$, $n'=n_1+n_2$ and $r'=r_1+r_2-\varphi_v^r(\gamma_1,\gamma_2)$.
Computing $\mu_v$ from the values of $\mu_{v_1}$ and $\mu_{v_2}$
base on (\ref{eqd-15}) requires time $O(K^2n^2r^2)$ and
thus the total running time of the algorithm is $O(K^2n^3r^2)$.
The Tutte polynomial of $\cM$ can be read from $\mu_r$ where $r$ is the root of $\cT$
since the value $\mu_r(0,\alpha,\beta)$
is the coefficient at $(x-1)^{r(E)-\beta} (y-1)^{\alpha-\beta}$ in (\ref{eqd-14}).

Let us turn our attention to evaluating the Tutte polynomial for given values of $x$ and $y$.
This time, we recursively compute the following quantity for every node $v$ of $\cT$:
\begin{equation}
\mu_v(\gamma)=\sum_{F\subseteq E_v}(x-1)^{r(E)-r(F)}(y-1)^{|F|-r(F)}
\label{eqd-16}
\end{equation}
where the sum is taken over the subsets $F$ with color $\gamma$.
The value of $\mu_v(0)$ is equal to $(x-1)^{r(E)}y$ for a leaf $v$ corresponding to a loop of $\cM$ and
to $(x-1)^{r(E)}$ if $v$ correspond to a non-loop. In the latter case, $\mu_v(1)$ is also equal
to $(x-1)^{r(E)-1}$. All other values of $\mu_v$ are equal to $0$ in both cases.

For a node $v$ of $\cT$ with two children $v_1$ and $v_2$, the equation (\ref{eqd-16}) and
the definition of a $K$-decomposition implies that
$$\mu_v(\gamma)=\sum_{0\le \gamma_1,\gamma_2\le K}\frac{\mu_{v_1}(\gamma_1)\mu_{v_2}(\gamma_2)}{(x-1)^{r(E)+\varphi^r_v(\gamma_1,\gamma_2)}(y-1)^{\varphi_v^r(\gamma_1,\gamma_2)}}$$
where the sum is taken over values $\gamma_1$ and $\gamma_2$ such that $\gamma=\varphi_v(\gamma_1,\gamma_2)$.
Under the assumption that arithmetic operations require constant time, determining the values
of $\mu_v$ needs time $O(K^2)$. Since the number of nodes of $\cT$ is $O(n)$, the total running
time of the algorithm is $O(K^2n)$ as claimed in the statement of the theorem.
\end{proof}

As a corollary, we obtain Hlin{\v e}n{\'y}'s result on computing the Tutte polynomial and its values
for matroids represented over finite fields of bounded branch-width~\cite{bib-hlineny-tutte}.

\begin{corollary}
\label{cor-tutte}
Let $\FF$ be a fixed finite field and $k$ a fixed integer. There is a polynomial-time algorithm
for computing and evaluating the Tutte polynomial for the class of matroids of branch-width at most $k$ representable over $\FF$ that are given by their representation over the field $\FF$.
\end{corollary}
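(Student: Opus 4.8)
The plan is to combine Corollary~\ref{cor-construct} (or really Theorem~\ref{thm-construct}) with Theorem~\ref{thm-tutte} in a two-stage pipeline. First I would observe that the class of matroids of branch-width at most $k$ representable over $\FF$ is exactly the kind of input for which a small-width decomposition can be produced automatically: given the representation over $\FF$, run the fixed-parameter branch-decomposition algorithm of Hlin{\v e}n{\'y} and Oum~\cite{bib-hlineny-fpt} to obtain a branch-decomposition of width at most $k$ (its existence is guaranteed by the hypothesis, so the algorithm will not report failure). Then feed this branch-decomposition together with the given representation into the construction of Theorem~\ref{thm-construct}, which outputs a $K$-decomposition with $K=\frac{q^{k+1}-q(k+1)+k}{(q-1)^2}$ in time $O(n^{1+\alpha})$. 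Since $\FF$ and $k$ are fixed, $K=K(\FF,k)$ is a constant. This is precisely the content of Corollary~\ref{cor-construct}, so I would simply invoke it.

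Second, with the $K$-decomposition in hand and $K$ a constant, I would apply Theorem~\ref{thm-tutte}: computing the coefficients of the Tutte polynomial takes time $O(K^2 n^3 r^2)$ and evaluating it at a given point takes time $O(K^2 n)$. Because $r\le n$ and $K$ is a constant depending only on $\FF$ and $k$, both of these are polynomial in $n$, which is exactly the claimed conclusion. So the whole proof is a short two-sentence composition: ``by Corollary~\ref{cor-construct} construct the $K$-decomposition in polynomial time; by Theorem~\ref{thm-tutte} compute or evaluate the Tutte polynomial from it in polynomial time.''

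There is essentially no real obstacle here, since the corollary is a formal consequence of results already proved. The only points that need a sentence of care are: (i) noting that the branch-decomposition algorithm succeeds because the input is promised to have branch-width at most $k$; (ii) noting that $K$ is a constant once $\FF$ and $k$ are fixed, so the factors $K^2$, $K^8$, etc., are absorbed into the implied constant of the polynomial bound; and (iii) the bookkeeping of combining the two running times $O(n^{1+\alpha})$ and $O(K^2 n^3 r^2)$ into a single polynomial bound. None of these requires more than a remark. If one wanted to be slightly more careful about the Tutte polynomial having $O(n)$-bit coefficients, one could mention that arithmetic on such numbers costs a polynomial overhead and the bound remains polynomial, but under the stated bit-cost assumption of Theorem~\ref{thm-tutte} even that is unnecessary. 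So I would keep the proof to a single short paragraph that chains the two earlier results.
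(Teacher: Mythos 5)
Your proposal is correct and is essentially the paper's own proof: invoke Corollary~\ref{cor-construct} to build a $K$-decomposition with $K$ a constant depending only on $\FF$ and $k$, then apply Theorem~\ref{thm-tutte}. The extra remarks on running-time bookkeeping are fine but not needed beyond what the paper already states.
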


\begin{proof}
Let $\cM$ be a matroid of branch-width at most $k$ represented over $\FF$.
By Corollary~\ref{cor-construct}, we can efficiently find a $K$-decomposition of $\cM$
where $K$ is a constant depending only on $\FF$ and $k$.
As $K$ depends on $\FF$ and $k$ only, the Tutte polynomial of $\cM$ can
be computed and evaluated in time polynomial in the number of elements of $\cM$ by Theorem~\ref{thm-tutte}.
\end{proof}

\section{Deciding MSOL-properties}
\label{sect-msol}

In this section, we show that there is a linear-time
algorithm for deciding monadic second order logic formulas for
matroids of bounded decomposition width. First, let us be precise
on the type of formulas that we are interested in. A
\first{monadic second order logic formula} $\psi$, an MSOL formula, for a matroid
contains basic logic operators (the negation, the disjunction and
the conjunction), quantifications over elements and sets of
elements of a matroid (we refer to these variables as to element
and set variables), the equality predicate, the predicate of
containment of an element in a set and the independence predicate
which determines whether a set of elements of a matroid is
independent. The independence predicate depends on and encodes an
input matroid $\cM$.

In order to present the algorithm, we introduce auxiliary notions
of a $K$-half and $K$-halved matroids which we extend to
interpreted $K$-halves. A \first{$K$-half} is a
matroid $\cM$ with ground-set $E$ and equipped with a function
$\varphi_{\cM}: 2^{E}\to\{0,\ldots,K\}$. A
\first{$K$-halved} matroid is a matroid
$\cM$ with ground-set $E=E_1\cup E_2$ composed of a $K$-half
$\cM_1$ with ground set $E_1$ and a matroid $\cM_2$ with ground
set $E_2$ such that each subset of $F\subseteq E_2$ is assigned a
vector $w^F$ of $K+1$ non-negative integers. Both $\cM_1$ and
$\cM_2$ are matroids. The rank of a subset $F\subseteq E_1\cup
E_2$ is given by the following formula: $$r_{\cM}(F) =
r_{\cM_1}(F\cap E_1) + r_{\cM_2}(F\cap E_2) - w^{F\cap
E_2}_{\varphi_{\cM_1}(F\cap E_1)}\;\mbox{,}$$ where $w^{F\cap
E_2}_{\varphi_{\cM_1}(F\cap E_1)}$ is the coordinate
of the vector $w^{F\cap E_2}$ corresponding to $\varphi_{\cM_1}(F\cap E_1)$.
We will write $\cM=\cM_1\oplus_K \cM_2$ to represent the fact that
the matroid $\cM$ is a $K$-halved matroid obtained from $\cM_1$
and $\cM_2$ in the way we have just described.

The next lemma justifies the just introduced definition of $K$-halved matroids
since it asserts that every matroid $\cM$ represented by a $K$-decomposi\-tion
can be viewed as a composed of two $K$-halves, one of which
is $\cM$ restricted to the elements corresponding to leaves of
a subtree of its $K$-decomposition.

\begin{lemma}
\label{lm-aux}
Let $\cT$ be a $K$-decomposition of a matroid $\cM$, $K\ge 1$, and let $v$ be a node of $\cT$.
Further, let $E_v$ be the set of elements of $\cM$ assigned to the leaves of the subtree of $\cT$ rooted at $v$, and
$\overline{E_v}$ the set of the remaining elements of $\cM$. If $F_1$ and $F_2$ are two subsets of $E_v$ such that
the color of $v$ with respect to the decomposition is the same for $F_1$ and $F_2$, then
$$r(F)+r(F_1)-r(F\cup F_1)=r(F)+r(F_2)-r(F\cup F_2)$$
for every subset $F$ of $\overline{E_v}$.
\index{$K$-decomposition}
\end{lemma}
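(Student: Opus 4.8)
The plan is to trace how the coloring and labeling procedure of the $K$-decomposition $\cT$ propagates along the path from $v$ to the root, and to show that everything encountered on the way, except the label of $v$ itself, is determined by $F$ alone.

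First I would record a locality property. Let $\cT_v$ be the subtree of $\cT$ rooted at $v$; its leaves are exactly the elements of $E_v$. Since $F\subseteq\overline{E_v}$, applying the procedure from the definition with input set $F\cup F_i$ produces inside $\cT_v$ precisely the colors and labels that it would produce with input set $F_i$ alone. In particular the color of $v$ for the input $F\cup F_i$ equals the color of $v$ with respect to $F_i$, which by hypothesis is one and the same value for $i=1$ and $i=2$. Write $\ell_i$ for the label of $v$ under the input $F\cup F_i$; the values $\ell_1$ and $\ell_2$ need not be equal, but each $\ell_i$ depends only on $F_i$ (and on the loop/non-loop data of $\cT$), not on $F$.

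Next I would walk up the path $v=u_0,u_1,\dots,u_m=\rho$ from $v$ to the root $\rho$ of $\cT$. For $j\ge 1$ let $w_j$ denote the child of $u_j$ other than $u_{j-1}$; since $\cT_v$ and the subtree rooted at $w_j$ are disjoint, the latter has all its leaves in $\overline{E_v}$, so the color and the label of $w_j$ under the input $F\cup F_i$ do not depend on $i$. Now a short induction on $j$, using that the color of $u_j$ is $\varphi_{u_j}(\gamma_1,\gamma_2)$ and its label is $\lambda_1+\lambda_2-\varphi_{u_j}^r(\gamma_1,\gamma_2)$ where $\gamma_1,\gamma_2$ and $\lambda_1,\lambda_2$ are the colors and labels of the two children of $u_j$, shows that the color of $u_j$ is independent of $i$ and that the label of $u_j$ equals $\ell_i+c_j$ for an integer $c_j$ depending on $F$ and $j$ but not on $i$ (with $c_0=0$). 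Taking $j=m$ and recalling that the label of the root is, by definition, the rank of the input set, we obtain $r(F\cup F_i)=\ell_i+c_m$, with $c_m$ depending on $F$ only.

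Finally, the same identity applied with $\emptyset$ in place of $F$ --- legitimate since $\emptyset\subseteq\overline{E_v}$, and leaving each $\ell_i$ unchanged because $\ell_i$ depends only on $F_i$ --- gives $r(F_i)=\ell_i+c'$ for a constant $c'$ independent of $i$. Subtracting, $r(F_i)-r(F\cup F_i)=c'-c_m$, which does not depend on $i$; hence it takes the same value for $i=1$ and $i=2$, and adding $r(F)$ to both sides yields exactly the asserted equality. The one slightly delicate point is the bookkeeping in the inductive step --- respecting the order in which the two children of $u_j$ are fed into $\varphi_{u_j}$ and $\varphi_{u_j}^r$ --- together with the observation that one cannot simply claim the label of $v$ equals $r(F_i)$ (the decomposition only guarantees correctness at the root), which is exactly why the comparison with the input $\emptyset$ is needed.
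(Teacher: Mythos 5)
Your proof is correct and follows essentially the same route as the paper's: both arguments exploit that switching $F_1$ for $F_2$ leaves the colors and labels of all off-path subtrees unchanged, that the colors along the path from $v$ to the root are unchanged because the color of $v$ is, and hence that the root's label differs from the label of $v$ by a quantity depending only on $F$, after which comparing with the input $F_i$ alone (your $F=\emptyset$ step, implicit in the paper) gives the identity. Your write-up is in fact more careful than the paper's, notably in flagging that the label of $v$ need not equal $r(F_i)$.
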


\begin{proof}
The equality $$r(F)+r(F_1)-r(F\cup F_1)=r(F)+r(F_2)-r(F\cup F_2)$$
follows from the fact that the vertex $v$ gets the same color for
both sets $F_1$ and $F_2$: switching between $F_1$ and $F_2$ can
change the labels of the vertices in the subtree of $v$ and on the
path $P$ from $v$ to the root only. Since the color of $v$ is the
same for $F_1$ and $F_2$, switching between $F_1$ and $F_2$ does
not change the colors of the vertices on the path $P$. Hence, the
label of the root is equal to the sum of the labels of the
children of the vertices on $P$ decreased by the values of the
functions $\varphi_w$ for vertices on $P$ which are not affected
by switching between $F_1$ and $F_2$ since they only depend on the
colors of children of $w$ which are not changed by the switch.
Hence, the equality from the statement of the lemma holds.
\end{proof}

For the purpose of induction in the next lemma, let us allow free
variables in formulas. For an MSOL formula $\psi$ with free
variables $\xi_1,\ldots,\xi_k$, a \first{$K$-signature} $\sigma$
is a mapping $\sigma:\{1,\ldots,k\}\to\{0,\ldots,K\}\cup\{\star\}$
such that $\sigma^{-1}(\star)$ contains only indices of element variables.
A \first{$\sigma$-interpretation} of a $K$-half $\cM_1$
is an assignment of elements $e$ to element variables $\xi_j$ with
$\sigma(j)\not=\star$ such that $\varphi_{\cM_1}(\{e\})=\sigma(j)$
and an assignment of subsets $F$ of the ground set of $\cM_1$ to
set variables $\xi_j$ with $\varphi_{\cM_1}(F)=\sigma(j)$. A
\first{$\sigma$-interpretation} of $\cM_2$ is an assignment of
elements of $\cM_2$ to element variables $\xi_j$ with
$\sigma(j)=\star$ and an assignment of subsets $F$ of the ground
set of $\cM_2$ to set variables $\xi_j$.
A matroid $\cM_i$, $i\in\{1,2\}$,
with a $\sigma$-interpretation is said to be {\em $\sigma$-interpreted}.

If both $\cM_1$ and $\cM_2$ are $\sigma$-interpreted, then the $\sigma$-interpretations of $\cM_1$ and $\cM_2$
naturally give an assignment of element variables $\xi_j$ (given by the assignment for $\cM_1$
if $\sigma(j)\not=\star$ and by the assignment for $\cM_2$, otherwise) and an assignment of set variables $\xi_j$
by uniting the assignments for $\xi_j$ in the $\sigma$-interpretations of $\cM_1$ and $\cM_2$.
For a formula $\psi$ with a $K$-signature $\sigma$,
$\psi$ is satisfied for a matroid $\cM=\cM_1\oplus_K \cM_2$
with a $\sigma$-interpreted $K$-half $\cM_1$ and $\sigma$-interpreted $\cM_2$
if the assignment given by the $\sigma$-interpretations and $\cM$ satisfies $\psi$.

The crucial notion in our argument is that of $(\psi,\sigma)$-equivalence of $\sigma$-inter\-pre\-ted $K$-halves
which we now define. For a formula $\psi$ with a $K$-signature $\sigma$,
two $\sigma$-interpreted $K$-halves $\cM_1$ and $\cM'_1$ are {\em $(\psi,\sigma)$-equivalent}
if for every $\sigma$-interpreted $\cM_2$, the formula $\psi$ is satisfied for $\cM_1\oplus_K \cM_2$
if and only if it is satisfied for $\cM'_1\oplus_K \cM_2$.

In the next lemma, we show that the number of $(\psi,\sigma)$-equivalence classes of
$\sigma$-interpreted $K$-halves is finite
for every MSOL formula $\psi$ and every $K$-signature $\sigma$ of $\psi$.

\begin{lemma}
\label{lm-equiv}
Let $\psi$ be a fixed MSOL formula and $\sigma$ a $K$-signature of $\psi$.
The number of $(\psi,\sigma)$-equivalence classes of $K$-halves is finite.
\end{lemma}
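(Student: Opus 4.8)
The plan is to prove the lemma by structural induction on $\psi$, in the style of the Feferman--Vaught theorem and of Courcelle's theorem for graphs. For each MSOL formula $\psi$ and each $K$-signature $\sigma$ of $\psi$ I will exhibit a finite set of \emph{types} together with an assignment of a type to every $\sigma$-interpreted $K$-half, and I will show that two $\sigma$-interpreted $K$-halves with the same type are $(\psi,\sigma)$-equivalent; since there are only finitely many types, the number of $(\psi,\sigma)$-equivalence classes is then finite.

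For the base case I would go through the atomic formulas. The common point is that for $\cM=\cM_1\oplus_K\cM_2$ the truth of an atom splits into an $\cM_1$-part and an $\cM_2$-part which communicate only through the value of $\varphi_{\cM_1}$ on the set variable involved, and that value is pinned down by $\sigma$. An element variable $\xi_j$ with $\sigma(j)\neq\star$ is interpreted in $E_1$ and one with $\sigma(j)=\star$ in $E_2$, while a set variable is interpreted as $F=F_1\cup F_2$ with $F_i\subseteq E_i$. Hence equality and containment are decided by a bounded amount of information about the $\cM_1$-interpretation (which of the element variables interpreted in $E_1$ coincide, and which of them lie in the $\cM_1$-part of which set variable). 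For the independence predicate applied to a set variable $\xi_j$ interpreted as $F=F_1\cup F_2$, the rank formula of a $K$-halved matroid gives
$$|F|-r_{\cM}(F) \;=\; \bigl(|F_1|-r_{\cM_1}(F_1)\bigr) + \bigl(|F_2|-r_{\cM_2}(F_2)\bigr) + w^{F_2}_{\varphi_{\cM_1}(F_1)}\;\mbox{,}$$
a sum of non-negative integers; since $\varphi_{\cM_1}(F_1)=\sigma(j)$ is fixed, $F$ is independent in $\cM$ if and only if $F_1$ is independent in $\cM_1$, $F_2$ is independent in $\cM_2$, and the coordinate of $w^{F_2}$ indexed by $\sigma(j)$ is zero. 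So this atom depends on $\cM_1$ only through whether the $\cM_1$-parts of the relevant set variables are independent, and finitely many types suffice.

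For the induction step, negation leaves the signature, and hence the equivalence relation, unchanged. For $\psi_1\vee\psi_2$ and $\psi_1\wedge\psi_2$ I would restrict $\sigma$ and the interpretations to the free variables of each $\psi_i$ and let the type of a $\sigma$-interpreted $K$-half be the pair of its $(\psi_i,\sigma)$-types, which has finite range by the induction hypothesis. The substantial case is $\psi=\exists\xi_{k+1}\,\psi'$. I would range over the finitely many extensions $\sigma'$ of $\sigma$ to the index $k+1$, with $\sigma'(k+1)\in\{0,\ldots,K\}$ if $\xi_{k+1}$ is a set variable and $\sigma'(k+1)\in\{0,\ldots,K\}\cup\{\star\}$ if it is an element variable, the symbol $\star$ meaning that the witness is interpreted in $\cM_2$. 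An assignment to $\xi_{k+1}$ in $\cM_1\oplus_K\cM_2$ determines a unique such $\sigma'$ together with an extension of the $\sigma$-interpretation of $\cM_1$ to a $\sigma'$-interpretation and an extension of that of $\cM_2$, and every such pair of extensions arises in this way; hence $\cM_1\oplus_K\cM_2\models\psi$ if and only if for some $\sigma'$ there are such extensions of the two interpretations making $\psi'$ true. By the induction hypothesis, for each $\sigma'$ the $\sigma'$-interpreted $K$-halves fall into finitely many $(\psi',\sigma')$-classes, and whether $\psi'$ holds for a $\sigma'$-interpreted $K$-half together with a given $\sigma'$-interpreted $\cM_2$ depends on the $K$-half only through its $(\psi',\sigma')$-class. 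Therefore the truth of $\psi$ for $\cM_1\oplus_K\cM_2$ depends on $\cM_1$ only through the finite datum recording, for each $\sigma'$, which $(\psi',\sigma')$-classes are realized by the $\sigma'$-interpretations extending that of $\cM_1$; I take this datum to be the $(\psi,\sigma)$-type of $\cM_1$. There are finitely many such data, which closes the induction.

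I expect the main obstacle to be the bookkeeping in the existential step, where one must simultaneously keep track of the signature extension, the extension of the interpretation on the $\cM_1$ side, and the extension on the $\cM_2$ side, and check that the truth of $\psi'$ genuinely factors through the $(\psi',\sigma')$-equivalence class of the extended $\cM_1$-side object---this is exactly the induction hypothesis, but one has to arrange the extended interpretations carefully for it to apply. The only place where matroid-specific structure enters is the independence atom, and there the key fact is the non-negativity of the three summands in the rank formula of a $K$-halved matroid, which is what confines the $\cM_1$-contribution to finitely many possibilities.
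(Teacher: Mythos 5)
Your proposal is correct and follows essentially the same route as the paper: an induction on the structure of $\psi$ in which each $\sigma$-interpreted $K$-half is assigned one of finitely many types (for $\exists\xi\,\psi'$, the record of which $(\psi',\sigma')$-classes are realized by extensions of the interpretation, over all finitely many signature extensions $\sigma'$), with the same transfer argument showing that equal types force $(\psi,\sigma)$-equivalence. Your explicit computation for the independence atom, showing $|F|-r_{\cM}(F)$ is a sum of three non-negative terms so that the $\cM_1$-side contributes only the bit ``is $F_1$ independent,'' is a nice elaboration of a step the paper states without justification, but it does not change the argument.
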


\begin{proof}
We prove the statement of the lemma by induction on the size of $\psi$.
Let us start with the simplest possible formulas,
i.e., $\xi_1=\xi_2$, $\xi_1\in\xi_2$ and $\ind_\cM(\xi_1)$
where $\ind_{\cM}$ is the independence predicate for a matroid $\cM=\cM_1\oplus_K \cM_2$.

Let $\psi$ be equal to $\xi_1=\xi_2$. If
$\sigma(1)\not=\sigma(2)$, then $\psi$ cannot be
$\sigma$-satisfied and thus all $\sigma$-interpreted $K$-halves
are $(\psi,\sigma)$-equivalent. If $\sigma(1)=\sigma(2)\not=\star$
and $\xi_1$ and $\xi_2$ are element variables, then two
$\sigma$-interpreted $K$-halves $\cM_1$ and $\cM'_1$ are
$(\psi,\sigma)$-equivalent if their interpretations assign $\xi_1$
and $\xi_2$ the same element. Hence, the number of
$(\psi,\sigma)$-equivalence classes of $\sigma$-interpreted
$K$-halves is two. If $\sigma(1)=\sigma(2)=\star$, then all
$\sigma$-interpreted $K$-halves are $(\psi,\sigma)$-equivalent
since the truth of $\psi$ is determined by the
$\sigma$-interpretation of $\cM_2$. If $\xi_1$ and $\xi_2$ are set
variables, then a little more complex argument shows that the
number of $(\psi,\sigma)$-equivalence classes of
$\sigma$-interpreted $K$-halves is two if $\sigma(1)=\sigma(2)$
and it is one if $\sigma(1)\not=\sigma(2)$.

For the formula $\psi=\left(\xi_1\in\xi_2\right)$, the number of equivalence classes for $\sigma(1)=\star$ is one as
the truth of the formula $\psi$ depends only on the $\sigma$-interpretation of $\cM_2$ and
the number of $(\psi,\sigma)$-equivalence classes for $\sigma(1)\not=\star$ is two depending
on the fact whether $\xi_1\in\xi_2$ in the $\sigma$-interpretation of $\cM_1$.
For the formula $\psi=\ind_{\cM}(\xi_1)$, there are two $(\psi,\sigma)$-equivalence classes
distinguishing $\sigma$-interpreted $K$-halves with sets $\xi_1$ independent in the $\sigma$-interpretation and
those with a set $\xi_1$ that is not independent.

We now consider more complex MSOL formulas. By standard logic manipulation, we can assume that
$\psi$ is of one of the following forms: $\neg \psi_1$, $\psi_1\lor\psi_2$ or $\exists\xi \psi_1$.
The $(\psi,\sigma)$-equivalence classes for $\psi=\neg\psi_1$ are the same as
the $(\psi_1,\sigma)$-equivalence classes. Hence, their number is finite by the induction.

For the disjunction $\psi_1\lor\psi_2$, if $\cM_1$ and $\cM'_1$ are $\sigma$-interpreted $K$-halves that
are both $(\psi_1,\sigma)$-equivalent and $(\psi_2,\sigma)$-equivalent,
then $\cM_1$ and $\cM'_1$ are also $(\psi,\sigma)$-equivalent. Indeed, if $\cM_1\oplus_K \cM_2$ satisfies
$\psi_j$, $j\in\{1,2\}$, then $\cM'_1\oplus_K \cM_2$ also satisfies $\psi_j$. On the other hand, if $\cM_1\oplus_K \cM_2$
satisfies neither $\psi_1$ nor $\psi_2$, then $\cM'_1\oplus_K \cM_2$ also satisfies neither $\psi_1$ nor $\psi_2$.
Hence, the $(\psi,\sigma)$-equivalence classes of $\sigma$-interpreted $K$-halves are unions
of the intersections of the $(\psi_1,\sigma)$-equivalence classes and $(\psi_2,\sigma)$-equivalence classes.
In particular, the number of $(\psi,\sigma)$-equivalence classes of $\sigma$-interpreted $K$-halves is finite.

The final type of formulas we consider are those of the type $\psi=\exists\xi\psi_1$.
Let $\cM_1$ and $\cM'_1$ be two $\sigma$-interpreted $K$-halves. Note that the domain of $\sigma'$
contains in addition to $\sigma$ the index corresponding to the variable $\xi$.
There are $K+1$ or $K+2$ possible extensions of $\sigma$ to $\sigma'$ (the number depends
on the fact whether $\xi$ is an element or set variable). Let $\sigma'_1,\ldots,\sigma'_N$, $N\in\{K+1,K+2\}$,
be the extensions of $\sigma$ and let $C_i$, $i=1,\ldots,N$, be the number
of $(\psi,\sigma'_i)$-equivalence classes of $\sigma'_i$-interpreted $K$-halves.
Every $\sigma$-interpretation of $\cM_1$ can be extended to a $\sigma'_i$-interpretation of $\cM_1$
in possibly $C_i$ non-equivalent ways.

We claim that if the extensions of the $\sigma$-interpretation of
$\cM_1$ and $\cM'_1$ to $\sigma'_i$-interpretations appear in
exactly the same $(\psi,\sigma'_i)$-equivalence classes for every
$i$, then $\cM_1$ and $\cM'_1$ are $(\psi,\sigma)$-equivalent.
Since there are $2^{C_1+\cdots+C_N}$ possible types of extensions
of the $\sigma$-interpretation to $\sigma'_i$-interpretations and
those $\cM_1$ and $\cM'_1$ with the extensions of the same type must be
$(\psi,\sigma)$-equivalent, the number of
$(\psi,\sigma)$-equivalence classes is bounded by
$2^{C_1+\cdots+C_N}$.

Assume that $\cM_1$ and $\cM'_1$ with
extensions of the same type are not $(\psi,\sigma)$-equivalent. By
symmetry, we can assume that there exist a $\sigma$-interpreted
$\cM_2$ such that $\psi$ is satisfied in $\cM_1\oplus_K \cM_2$ and
is not satisfied in $\cM'_1\oplus_K \cM_2$. Let $\xi_0$ be the
choice of $\xi$ that satisfies $\psi_1$ in $\cM_1\oplus_K \cM_2$.
The value $\xi_0$ uniquely determines an extension of $\sigma$ to
$\sigma'_i$ by setting the value of $\sigma'_i$ corresponding to
$\xi$ to $\varphi_{\cM_1}(\xi_0)$ or $\varphi_{\cM_1}(\{\xi_0\})$
unless $\xi_0$ is an element of $\cM_2$; if $\xi_0$ is an element
of $\cM_2$, the value of $\sigma'_i$ corresponding to $\xi$ should
be set to $\star$. Since $\psi_1$ is satisfied for the
$\sigma$-interpretations of $\cM_1$ and $\cM_2$ by choosing
$\xi=\xi_0$, $\psi_1$ is satisfied for a
$\sigma'_i$-interpretation of $\cM_1$ and a
$\sigma'_i$-interpretation of $\cM_2$. By our assumption, there
exists a $\sigma'_i$-interpretation of $\cM'_1$ that extends its
$\sigma$-interpretation and that is $(\psi,\sigma'_i)$-equivalent
to the $\sigma'_i$-interpretation of $\cM_1$. By the definition of
$(\psi,\sigma'_i)$-equivalence, the formula $\psi$ is satisfied in
$\cM'_1\oplus_K \cM_2$. In particular, there exists a choice of
the value of $\xi$ in $\cM'_1\oplus_K \cM_2$ such that
$\cM'_1\oplus_K \cM_2$ is satisfied for the
$\sigma$-interpretations of $\cM'_1$ and $\cM_2$ which contradicts
our assumption that $\psi$ is not satisfied in $\cM'_1\oplus_K
\cM_2$.
\end{proof}

Let us return to our original problem of deciding MSOL formulas
with no free variables. Analogously, for an MSOL formula $\psi$
with no free variables, two $K$-halves $\cM_1$ and $\cM_1'$ are
{\em $\psi$-equivalent} if the formula $\psi$ is satisfied for
$\cM_1\oplus_K \cM_2$ if and only if $\psi$ is satisfied for
$\cM'_1\oplus_K \cM_2$. By Lemma~\ref{lm-equiv}, the number of
$\psi$-equivalence classes of $K$-halves is finite.

For a $K$-decomposition $\cT$ of a matroid $\cM$, we can obtain a
$K$-half by restricting $\cM$ to the elements corresponding to the
leaves of a subtree of $\cT$ (note that the subsets of the
elements not corresponding to the leaves of $\cT$ can be assigned
non-negative integers as in the definition of a $K$-halved matroid
by Lemma~\ref{lm-aux}). The $K$-half obtained from $\cM$ by
restricting it to the elements corresponding to the leaves of a
subtree rooted at a vertex $v$ of $\cT$ is further denoted by
$\cM_v$. The next lemma is the core of the linear-time algorithm
for deciding the satisfiability of the formula $\psi$.

\begin{lemma}
\label{lm-parse}
Let $\psi$ be a monadic second order logic formula,
let $\cT$ be a $K$-decomposition of a matroid $\cM$ and
let $v$ be a node of $\cT$ with children $v_1$ and $v_2$.
The $\psi$-equivalence class of $\cM_v$ is uniquely determined by the $\psi$-equivalence classes of $\cM_{v_1}$ and
$\cM_{v_2}$, and the functions $\varphi_v$ and $\varphi^r_v$.
\end{lemma}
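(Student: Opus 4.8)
The plan is to run a Myhill--Nerode style argument in which the role of the parse‑tree composition operation is played by $\oplus_K$: I will show that if a $K$-half $\mathcal{A}_1$ is $\psi$-equivalent to $\cM_{v_1}$ and a $K$-half $\mathcal{A}_2$ is $\psi$-equivalent to $\cM_{v_2}$, then the $K$-half obtained from $\mathcal{A}_1$ and $\mathcal{A}_2$ by the same recipe — one that uses only $\varphi_v$ and $\varphi^r_v$ — that produces $\cM_v$ from $\cM_{v_1}$ and $\cM_{v_2}$ is again $\psi$-equivalent to $\cM_v$. Applying this with two nodes whose children lie in the same $\psi$-equivalence classes and whose functions $\varphi_v,\varphi^r_v$ coincide then gives the lemma.

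First I would make the composition rule precise. Running the colouring and labelling procedure inside the subtrees rooted at $v_1$ and $v_2$ shows that for disjoint $F_1\subseteq E_{v_1}$, $F_2\subseteq E_{v_2}$ and $\gamma_i=\varphi_{\cM_{v_i}}(F_i)$ one has $\varphi_{\cM_v}(F_1\cup F_2)=\varphi_v(\gamma_1,\gamma_2)$ and $r_{\cM_v}(F_1\cup F_2)=r_{\cM_{v_1}}(F_1)+r_{\cM_{v_2}}(F_2)-\varphi^r_v(\gamma_1,\gamma_2)$; the second identity relies on the label of a node being the rank of the corresponding restriction, which follows from $\cT$ representing $\cM$ and the empty-set convention, after — if necessary — normalising $\cT$ without changing its tree by replacing $\varphi^r_v$ with the non‑negative quantity $r_{\cM_{v_1}}(F_1)+r_{\cM_{v_2}}(F_2)-r_{\cM_v}(F_1\cup F_2)$, non‑negativity being submodularity of $r_\cM$. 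Write $\cM_v=\mathrm{comp}_v(\cM_{v_1},\cM_{v_2})$ for this operation on $K$-halves. The technical core is then an associativity law for $\oplus_K$: for every matroid‑with‑weight‑vectors $\mathcal{C}$ on a ground set disjoint from $E_v$,
$$\mathrm{comp}_v(\mathcal{A}_1,\mathcal{A}_2)\oplus_K\mathcal{C}\;=\;\mathcal{A}_1\oplus_K\mathrm{aug}_v(\mathcal{A}_2,\mathcal{C})\,,$$
an equality of matroids over the common ground set, where $\mathrm{aug}_v(\mathcal{A}_2,\mathcal{C})$ is a matroid‑with‑weight‑vectors on $E_{\mathcal{A}_2}\cup E_{\mathcal{C}}$ assembled from $\mathcal{C}$, $\varphi_v$ and $\varphi^r_v$ \emph{without reference to $\mathcal{A}_1$}. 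I would prove it by unfolding the rank of an arbitrary $F=F_1\cup F_2\cup F_0$ on both sides to the common value $r_{\mathcal{A}_1}(F_1)+r_{\mathcal{A}_2}(F_2)+r_{\mathcal{C}}(F_0)-\varphi^r_v(\gamma_1,\gamma_2)-w^{F_0}_{\varphi_v(\gamma_1,\gamma_2)}$, after pinning down the weight vectors of $\mathrm{aug}_v(\mathcal{A}_2,\mathcal{C})$; I would also record the symmetric law that instead pulls $\mathcal{A}_2$ into the $K$-half slot, and note that the two sides are matroids simultaneously (they have the same rank function).

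Finally I would run the substitution. Let $\mathcal{C}$ be any matroid‑with‑weight‑vectors with $\cM_v\oplus_K\mathcal{C}$ a matroid. By the composition rule and associativity, $\psi$ holds in $\cM_v\oplus_K\mathcal{C}=\cM_{v_1}\oplus_K\mathrm{aug}_v(\cM_{v_2},\mathcal{C})$; since $\mathrm{aug}_v(\cM_{v_2},\mathcal{C})$ is a legitimate second component and $\cM_{v_1}\simeq_\psi\mathcal{A}_1$, this holds iff $\psi$ holds in $\mathcal{A}_1\oplus_K\mathrm{aug}_v(\cM_{v_2},\mathcal{C})=\mathrm{comp}_v(\mathcal{A}_1,\cM_{v_2})\oplus_K\mathcal{C}$. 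Rewriting this via the symmetric associativity law as $\cM_{v_2}\oplus_K\mathcal{D}$ for a matroid‑with‑weight‑vectors $\mathcal{D}$ built from $\mathcal{A}_1$, $\mathcal{C}$, $\varphi_v$, $\varphi^r_v$, using $\cM_{v_2}\simeq_\psi\mathcal{A}_2$, and re‑associating once more, this in turn holds iff $\psi$ holds in $\mathcal{A}_2\oplus_K\mathcal{D}=\mathrm{comp}_v(\mathcal{A}_1,\mathcal{A}_2)\oplus_K\mathcal{C}$. As $\mathcal{C}$ was arbitrary, $\cM_v$ and $\mathrm{comp}_v(\mathcal{A}_1,\mathcal{A}_2)$ are $\psi$-equivalent, which is exactly what was needed. (For MSOL formulas with free variables one carries a $K$-signature $\sigma$ through the whole argument and uses $(\psi,\sigma)$-equivalence and Lemma~\ref{lm-equiv} in place of plain $\psi$-equivalence; this refinement is not needed for Lemma~\ref{lm-parse} itself.)

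The main obstacle is the associativity law together with its bookkeeping: choosing the weight vectors of $\mathrm{aug}_v(\mathcal{A}_2,\mathcal{C})$ so that the two unfoldings agree on the nose, and keeping every intermediate object a genuine matroid so that the substitution stays inside the class of $K$-halved matroids, where $\psi$-equivalence can be applied. The composition rule itself is routine apart from the one genuine subtlety — identifying node labels with ranks of the corresponding restrictions — which is the place where a normalisation of the given $K$-decomposition may be required.
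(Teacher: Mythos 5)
Your proposal is correct and follows essentially the same route as the paper: the heart of both arguments is the re-association identity that rewrites $\cM_v\oplus_K\cM_0$ as $\cM_{v_1}\oplus_K\cM$ for a context $\cM$ assembled only from $\cM_{v_2}$, $\cM_0$, $\varphi_v$ and $\varphi^r_v$, followed by an appeal to the $\psi$-equivalence of the first components. The paper packages this as a one-line existence claim inside a proof by contradiction and varies only the first child, whereas you spell out the composition and associativity laws explicitly and also perform the symmetric substitution in the second child; this is a more complete write-up of the same idea, not a different proof.
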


\begin{proof}
If the statement of the lemma is false, then there exist $\cM_{v_1}$ and $\cM'_{v_1}$ of the same $\psi$-equivalence class
such that the equivalence classes of $\cM_v$ and $\cM'_v$ are different though $\cM_{v_2}$, $\varphi_v$ and $\varphi^r_v$
are the same.
Hence, there exists $\cM_0$ such that
$\psi$ is satisfied for one of the matroids $\cM_v\oplus_K \cM_0$ and $\cM'_v\oplus_K \cM_0$ but not both.
Let $\cM$ be the matroid such that $\cM_{v_1}\oplus_K \cM=\cM_v\oplus \cM_0$ and $\cM'_{v_1}\oplus_K \cM=\cM_v\oplus_K \cM_0$;
such $\cM$ exists since it is uniquely determined by $\cM_0$, $\cM_{v_2}$, $\varphi_v$ and $\varphi^r_v$.
By the choice of $\cM_0$, $\psi$ is satisfied for one of the matroids $\cM_{v_1}\oplus_K \cM$ and $\cM'_{v_1}\oplus_K \cM$
but not both and thus $\cM_{v_1}$ and $\cM'_{v_1}$ cannot be $\psi$-equivalent as supposed.
\end{proof}

We are now ready to present the main result of this section.

\begin{theorem}
\label{thm-msol}
Let $\psi$ be a fixed monadic second order logic and $K$ a fixed integer.
There exists an $O(n)$-time algorithm that given an $n$-element matroid $\cM$ with its $K$-decomposition
decides whether $\cM$ satisfies $\psi$.
\end{theorem}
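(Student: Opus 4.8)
The plan is to turn the structural Lemma~\ref{lm-parse} into a bottom-up dynamic-programming algorithm over the $K$-decomposition tree $\cT$, in exactly the same spirit as the proofs of Theorems~\ref{thm-verify} and~\ref{thm-tutte}. First I would fix, once and for all, a finite set $\mathcal{C}$ of $\psi$-equivalence classes of $K$-halves; its finiteness is guaranteed by Lemma~\ref{lm-equiv} (applied with the empty signature, as noted in the paragraph following its proof). Since $\psi$ and $K$ are fixed, $|\mathcal{C}|$ is a constant. For each leaf $\ell$ of $\cT$ the $K$-half $\cM_\ell$ is one of at most two trivial one-element $K$-halves (loop or non-loop), so its class can be written down in constant time. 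For an inner node $v$ with children $v_1,v_2$, Lemma~\ref{lm-parse} says the class of $\cM_v$ is a function of the classes of $\cM_{v_1}$ and $\cM_{v_2}$ together with $\varphi_v,\varphi_v^r$; I would precompute this transition — i.e., for each pair of classes and each admissible pair of functions determine the resulting class — or, more practically, compute it on demand at each node by any fixed representative-based procedure, in time depending only on $\psi$ and $K$. Processing all $O(n)$ nodes of $\cT$ from the leaves to the root then yields the $\psi$-equivalence class of $\cM_{\mathrm{root}}=\cM$ (viewed as $\cM\oplus_K\cM_2$ with $\cM_2$ the empty matroid) in total time $O(n)$, and whether $\cM\models\psi$ is read off directly from that class.

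The one genuine subtlety — and the step I expect to be the main obstacle — is that Lemma~\ref{lm-parse} is stated as a pure existence/uniqueness result: it tells us the class of $\cM_v$ is \emph{determined} by the data at $v$, but it does not by itself hand us an \emph{effective} procedure to compute it, nor a bound on the time that procedure takes. I would address this by making the inductive content of Lemmas~\ref{lm-equiv} and~\ref{lm-parse} constructive. Concretely, one shows by induction on the structure of $\psi$ (mirroring the cases $\neg\psi_1$, $\psi_1\lor\psi_2$, $\exists\xi\,\psi_1$, and the atomic formulas) that there is a computable finite list of representative $\sigma$-interpreted $K$-halves for the $(\psi,\sigma)$-equivalence classes, together with a computable transition table expressing, for each inner node and each choice of $\varphi_v,\varphi_v^r$, the class of the combined $K$-half in terms of the classes of the two pieces; the inductive step for $\exists$ uses precisely the ``type of extensions'' bookkeeping (the $2^{C_1+\cdots+C_N}$ count) from the proof of Lemma~\ref{lm-equiv}. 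Since $\psi$ and $K$ are constants, this entire table has constant size and is computed in constant time as preprocessing, after which each node of $\cT$ is handled by a single table lookup.

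It remains to assemble the pieces and check the running time. After the constant-time preprocessing that builds the transition table, the algorithm makes one pass over $\cT$: at each of the $O(n)$ nodes it performs a constant number of table lookups (one leaf initialization, or one combination step reading the two children's labels and the pair $(\varphi_v,\varphi_v^r)$), storing the resulting $\psi$-equivalence class label, which fits in constantly many bits. Correctness is immediate from Lemma~\ref{lm-parse} by induction up the tree: if the stored labels at $v_1$ and $v_2$ are the true $\psi$-equivalence classes of $\cM_{v_1}$ and $\cM_{v_2}$, then the transition table outputs the true class of $\cM_v$. At the root we obtain the class of $\cM$ itself, and since membership of $\psi$ in a given class is a fixed Boolean determined during preprocessing, we answer whether $\cM\models\psi$ in $O(1)$ additional time. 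The total is $O(n)$, as claimed.
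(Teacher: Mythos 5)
Your proposal is correct and follows essentially the same route as the paper: a bottom-up dynamic program over $\cT$ that assigns each node the $\psi$-equivalence class of $\cM_v$, using Lemma~\ref{lm-equiv} for finiteness, Lemma~\ref{lm-parse} for the transition at inner nodes, and a loop/non-loop case split at the leaves. The paper simply ``wires in'' the (finite, fixed) transition table rather than arguing its effective computability, so your extra discussion of constructiveness is a careful refinement of, not a departure from, the published argument.
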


\begin{proof}
The core of our algorithm is Lemma~\ref{lm-parse}. Since $\psi$ and $K$ is fixed and
the number of $\psi$-equivalence classes of $K$-halves is finite by Lemma~\ref{lm-equiv},
we can wire in the algorithm the transition table from the equivalence classes
of $\cM_{v_1}$ and $\cM_{v_2}$, $\varphi_v$ and $\varphi^r_v$ to the equivalence
class of $\cM_v$ where $v$ is a node of $\cT$ and $v_1$ and $v_2$ its two children.
At the beginning, we determine the equivalence classes of $\cM_v$
for the leaves $v$ of $\cT$; this is easy since the equivalence class of $\cM_v$
for a leaf $v$ depends only on the fact whether the element corresponding to $v$
is a loop or not.

Then, using the wired in transition table, we determine in constant time
the equivalence class of $\cM_v$
for each node $v$ based on $\varphi_v$, $\varphi^r_v$ and the equivalence classes
of $\cM_{v_1}$ and $\cM_{v_2}$ for children $v_1$ and $v_2$ of $v$. Once, the equivalence
classes of $\cM_{v_1}$ and $\cM_{v_2}$ for the two children $v_1$ and $v_2$ of the root of $\cT$
are found, it is easy to determine whether $\cM$ satisfies $\psi$.

As $K$ and $\psi$ are fixed,
the running time of our algorithm is clearly linear in the number of nodes of $\cT$
which is linear in the number of elements of the matroid $\cM$.
\end{proof}

Corollary~\ref{cor-construct} and Theorem~\ref{thm-msol}
yield the following result originally proved by Hlin{\v e}n{\'y}~\cite{bib-hlineny-msol}:

\begin{corollary}
\label{cor-msol}
Let $\FF$ be a fixed finite field, $k$ a fixed integer and $\psi$ a fixed monadic second order logic formula.
There is a polynomial-time algorithm for deciding whether a matroid of branch-width at most $k$
given by its representation over the field $\FF$ satisfies $\psi$.
\end{corollary}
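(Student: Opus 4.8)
The plan is to chain together the two main constructions already established: Corollary~\ref{cor-construct}, which turns a representation over $\FF$ into a $K$-decomposition of bounded width, and Theorem~\ref{thm-msol}, which evaluates a fixed MSOL formula on a matroid presented by such a decomposition in linear time. First I would set $q=|\FF|$ and $K=\left\lfloor\frac{q^{k+1}-q(k+1)+k}{(q-1)^2}\right\rfloor$; since $\FF$ and $k$ are fixed, $K$ is an absolute constant. Given an input matroid $\cM$ on $n$ elements, represented over $\FF$ and promised to have branch-width at most $k$, I would run the algorithm of Corollary~\ref{cor-construct}. Because $\bw(\cM)\le k$, that algorithm cannot report that the branch-width exceeds $k$, so it must output a $K'$-decomposition $\cT$ of $\cM$ with $K'\le K$; this takes time $O(n^{1+\alpha})$, where $\alpha$ is the matrix multiplication exponent.

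Next I would feed $\cT$ together with the fixed formula $\psi$ into the linear-time algorithm of Theorem~\ref{thm-msol}. Since both $\psi$ and $K$ are fixed, that algorithm decides whether $\cM$ satisfies $\psi$ in time $O(n)$, relying on the finiteness of the number of $\psi$-equivalence classes of $K$-halves from Lemma~\ref{lm-equiv} and the bottom-up propagation along $\cT$ from Lemma~\ref{lm-parse}. Composing the two phases, the total running time is $O(n^{1+\alpha})$, which is polynomial in $n$, and the output is precisely whether $\cM\models\psi$.

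The only point that needs care — the closest thing to an ``obstacle'' here — is to make sure the bound $K$ produced by Corollary~\ref{cor-construct} is genuinely independent of the input, so that treating $K$ as fixed in Theorem~\ref{thm-msol} is legitimate; this is immediate from the explicit formula in Theorem~\ref{thm-construct}, which depends only on $q$ and $k$. It is also worth noting that Corollary~\ref{cor-construct} works directly from the $\FF$-representation (it internally invokes the branch-decomposition algorithm of Hlin{\v e}n{\'y} and Oum), so no branch-decomposition needs to be supplied as part of the input. Beyond these remarks the corollary is a direct composition of the two results, and no further structural argument is required.
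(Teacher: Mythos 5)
Your proposal is correct and follows exactly the route the paper takes: the paper derives Corollary~\ref{cor-msol} by composing Corollary~\ref{cor-construct} (producing a $K$-decomposition with $K$ depending only on $\FF$ and $k$) with Theorem~\ref{thm-msol}. Your additional remarks about $K$ being an absolute constant and the branch-decomposition being computed internally are accurate but not points the paper needed to elaborate.
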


\section{Deciding representability}
\label{sect-repr}

We adopt the algorithm presented in~\cite{bib-stacs}. Let us start with recalling
some definitions from~\cite{bib-stacs}. A \first{rooted branch-decomposition} of $\cM$
is obtained from a branch-decomposition of $\cM$ by subdividing an edge and 
rooting the decomposition tree at a new vertex. If $\cM$ is a matroid and
$(A,B)$ is a partition of its ground set, then two subsets $A_1$ and $A_2$
are \first{$B$-indistinguishable}
if the following identity holds for every subset $B'$ of $B$:
$$r(A_1\cup B')-r(A_1)=r(A_2\cup B')-r(A_2)\;\mbox{.}$$
Clearly, the relation of being $B$-indistinguishable is an equivalence relation
on subsets of $A$.
Finally, the \first{transition matrix} for an inner node $v$ of
a rooted branch-decomposition $\cM$ is the matrix whose rows correspond to
$\overline{E_1}$-indistinguishable subsets of $E_1$ and
columns to $\overline{E_2}$-indistinguishable subsets of $E_2$
where $E_1$ and $E_2$ are the elements corresponding to the leaves of the two subtrees
rooted at the children of $v$ and $\overline{E_1}$ and $\overline{E_2}$ are their complements.
The entry of $\cM$ in the row corresponding to the equivalence class of $F_1\subseteq E_1$ and
in the column corresponding to the equivalence class of $F_2\subseteq E_2$ is
equal to $r(F_1)+r(F_2)-r(F_1\cup F_2)$. By the definition of indistinguishability,
the value of the entry is independent of the choice of $F_1$ and $F_2$ in their
equivalence classes.

The main algorithmic result of \cite{bib-stacs} can be reformulated as follows:

\begin{theorem}
\label{thm-repr}
Let $k$ be a fixed integer and $q$ a fixed prime power. There is a polynomial-time
algorithm that for a matroid $\cM$ given by its rooted branch-decomposition
with transition matrices whose number of rows and columns is at most $k$ and
a (oracle-given) mapping of subsets to equivalence classes corresponding
to rows and columns of its matrices decides whether $\cM$ can be represented
over $\GF(q)$ and if so, it computes one of its representations over $\GF(q)$.
Moreover, the algorithm can be modified to count all non-isomorphic representations
of $\cM$ over $\GF(q)$ and to list them (in time linearly dependent on the number of
non-isomorphic representations).
\end{theorem}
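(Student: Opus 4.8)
The plan is to obtain Theorem~\ref{thm-repr} essentially as a restatement of the main algorithmic result of~\cite{bib-stacs}, the real work being to check that the hypotheses used there are exactly the data made available in the present formulation. First I would recall the shape of the algorithm of~\cite{bib-stacs}: it processes the rooted branch-decomposition from the leaves towards the root and maintains, for every node $v$, a bounded-size table that records, for each equivalence class of $\overline{E_v}$-indistinguishable subsets of $E_v$, the collection of ``partial $\GF(q)$-representations'' of $\cM$ restricted to $E_v$ that are compatible with that class, taken up to a suitable notion of equivalence of representations. The transition step at an inner node $v$ with children $v_1,v_2$ combines the tables at $v_1$ and $v_2$ using only the transition matrix at $v$ (which encodes $r(F_1)+r(F_2)-r(F_1\cup F_2)$ for representatives $F_1\subseteq E_1$, $F_2\subseteq E_2$) together with the way the indistinguishability classes at $v_1$ and $v_2$ merge into classes at $v$; and this latter information is itself determined by the transition matrices, since whether two classes on one side stay distinguishable from the side of $\overline{E_v}$ depends only on how they interact with the classes on the other side.

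Second, I would pin down where the bound $k$ on the number of rows and columns of the transition matrices enters. The number of $\overline{E_v}$-indistinguishable classes at $v$ equals a row or column count of the transition matrix at the neighbouring inner node, hence is at most $k$; therefore every table maintained by the algorithm has size bounded by a function of $k$ and $q$ only, and each transition step takes time bounded by a function of $k$ and $q$. Summing over the $O(n)$ nodes of the decomposition yields a polynomial running time (with the degree of the polynomial not depending on $k$ or $q$). Crucially, in this accounting the matroid $\cM$ is consulted only through (i) the transition matrices at the inner nodes and (ii) the loop/non-loop status of the elements, both of which are part of the input; in particular no prior representation of $\cM$ is needed. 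This is precisely the content of the reformulation.

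Third, I would treat the two remaining features. The oracle mapping subsets to the row/column equivalence classes is what lets the algorithm turn its abstract bottom-up bookkeeping into an honest representation $\varphi\colon E\to\GF(q)^d$: whenever the reconstruction needs to place the coordinates of an element, or to match up the boundary of two sides, it queries the oracle for the relevant class and uses the partial representation stored for that class, exactly as in~\cite{bib-stacs}. For counting and listing all non-isomorphic $\GF(q)$-representations, I would observe that the table at each node already stores the full set of partial representations up to equivalence, so at the root one recovers all non-isomorphic representations of $\cM$; since each table entry is a set of bounded size, these can be enumerated with only constant overhead per representation, giving the claimed linear dependence on their number. Here one also uses, from~\cite{bib-stacs}, that the equivalence on partial representations is coarse enough to keep the tables bounded yet fine enough never to identify distinct non-isomorphic representations.

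The main obstacle I anticipate is not a new calculation but the bookkeeping of this reduction: one must verify that the algorithm of~\cite{bib-stacs}, which there may be phrased for a matroid supplied together with a representation or a rank oracle, genuinely accesses the matroid only through the transition matrices and the class-membership oracle, and that ``rooted branch-decomposition with transition matrices'' is a faithful substitute for its original input. A minor related issue is input validity: an arbitrary family of matrices need not arise from any matroid, but this causes no trouble, since the algorithm either produces a $\GF(q)$-representation or reports that none exists, and the positive answer is self-certifying (one can check directly that the output $\varphi$ realizes the ranks prescribed by the decomposition). Everything else—correctness of the dynamic programming and the polynomial time bound—carries over verbatim from~\cite{bib-stacs}.
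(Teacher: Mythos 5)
Your proposal takes the same route as the paper, which states Theorem~\ref{thm-repr} as a reformulation of the main algorithmic result of~\cite{bib-stacs} and offers no proof of its own; your sketch of the bottom-up dynamic programming over indistinguishability classes and the role of the transition matrices and the class-membership oracle is a faithful account of what that cited result provides. The only content beyond the citation is your verification that the algorithm consults the matroid solely through the transition matrices and the oracle, which is exactly the point the paper implicitly relies on.
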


Let $\cT$ be a $K$-decomposition of a matroid $\cM$, $v$ an inner node of $\cT$ and
$E_v$ the subset of the ground set of $\cM$ containing the elements corresponding
to the leaves of subtree of $\cT$ rooted at $v$.
View $\cT$ as a rooted branch-decomposition of $\cM$ (we do not claim any upper
bound on its width here).
Observe that any two subsets of $E_v$
assigned the same color at $v$ are $\overline{E_v}$-indistinguishable
where $\overline{E_v}$ is the complement of $E_v$.
In addition, the values of the function $\varphi^r_v$ are entries of the transition
matrix at $v$ as defined in the beginning of this section. Hence, Theorem~\ref{thm-repr}
yields the following.

\begin{corollary}
\label{cor-repr}
For every integer $K$ and every prime power $q$,
there is a polynomial-time algorithm that for a matroid $\cM$ given by its $K$-decomposi\-tion,
decides whether $\cM$ can be represented over $\GF(q)$ and
if so, it computes one of its representations over $\GF(q)$.
Moreover, the algorithm can be modified to count all non-isomorphic representations
of $\cM$ over $\GF(q)$ and to list them (in time linearly dependent on the number of
non-isomorphic representations).
\end{corollary}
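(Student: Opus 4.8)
The plan is to apply Theorem~\ref{thm-repr} essentially verbatim, using the $K$-decomposition $\cT$ of $\cM$ in the role of a rooted branch-decomposition and the colorings it induces in the role of the indistinguishability classes. First I would regard $\cT$ as a rooted branch-decomposition of $\cM$: each inner node of $\cT$ has exactly two children and the leaves correspond one-to-one to the elements of $\cM$, so no modification is needed (and we make no claim on its width, which is irrelevant for the statement).

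Next, fix an inner node $v$ whose two subtrees span $E_1$ and $E_2$. The labelling-and-coloring procedure from the definition of a $K$-decomposition assigns to every $F_1\subseteq E_1$ a color $\gamma_1\in\{0,\ldots,K\}$, and by Lemma~\ref{lm-aux} any two subsets of $E_1$ of the same color are $\overline{E_1}$-indistinguishable (and likewise for $E_2$); hence the color classes form a refinement, with at most $K+1$ parts, of the partition into indistinguishability classes. Moreover, for $F_1\subseteq E_1$ of color $\gamma_1$ and $F_2\subseteq E_2$ of color $\gamma_2$ the quantity $r(F_1)+r(F_2)-r(F_1\cup F_2)$ equals $\varphi^r_v(\gamma_1,\gamma_2)$ by the rank computation built into the decomposition, so the $(K+1)\times(K+1)$ matrix with entries $\varphi^r_v(\gamma_1,\gamma_2)$ serves as the transition matrix at $v$ and has at most $K+1$ rows and columns. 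The oracle that maps a subset to the equivalence class (color) indexing a given row or column is provided by running the coloring procedure of $\cT$ on that subset, in $O(n)$ time. Feeding this data to the algorithm of Theorem~\ref{thm-repr} with the role of $k$ played by $K+1$ then decides representability over $\GF(q)$, produces a representation when one exists, and — by the stated modification — counts and lists all non-isomorphic $\GF(q)$-representations; since $K+1$ is a constant, this runs in time polynomial in $n$.

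The one point that needs care is that the color classes only refine, and need not coincide with, the indistinguishability classes that Theorem~\ref{thm-repr} nominally expects: two distinct colors could in principle be $\overline{E_v}$-indistinguishable. I would dispatch this by observing that the algorithm of Theorem~\ref{thm-repr} is insensitive to passing to such a refinement: it uses only a bound on the number of classes, the transition-matrix entries, and a consistent subset-to-class oracle, all of which a refinement still supplies, with a refined partition merely introducing duplicate rows and columns that do not affect the search for a linear representation or the enumeration of representations. (Alternatively, one could preprocess by identifying colors whose rows of $\varphi^r$ coincide at every node simultaneously, a computation depending only on the finitely many transition matrices and hence costing $O(f(K)\,n)$.) Everything else is a routine instantiation of the already-established machinery.
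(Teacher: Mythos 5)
Your proposal is correct and follows essentially the same route as the paper: view the $K$-decomposition as a rooted branch-decomposition, observe via Lemma~\ref{lm-aux} that same-colored subsets are indistinguishable and that $\varphi^r_v$ supplies the transition-matrix entries, and invoke Theorem~\ref{thm-repr}. Your explicit treatment of the point that the colors only \emph{refine} the indistinguishability classes is a welcome extra level of care that the paper passes over in silence.
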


\section*{Acknowledgement}

The author would like to thank Bruno Courcelle for his valuable comments
on the conference paper~\cite{bib-stacs} and 
Ond{\v r}ej Pangr{\'a}c for his insightful comments on the content of this paper as well as
careful reading its early version.


\begin{thebibliography}{99}
\bibitem{bib-arnborg87+}
S. Arnborg, D. G. Corneil and A. Proskurowski:
Complexity of fiding embeddings in a k-tree,
SIAM J. Alg. Disc. Meth. 8 (1987), 277--284.
\bibitem{bib-arnborg91+}
S. Arnborg, J. Lagergren, D. Seese:
Easy problems for tree decomposable graphs,
J. Algorithms 12 (1991), 308--340.
\bibitem{bib-bodlaender88}
H. Bodlaender:
Dynamic programming algorithms on graphs with bounded tree-width,
in: Proc. ICALP 1988, LNCS. vol. 317, Springer, Berlin, 1988, 105--119.
\bibitem{bib-bodlaender93}
H. Bodlaender:
A linear time algorithm for finding tree-decompositions of small treewidth,
in: Proc. SODA 1993, ACM \& SIAM, 1993, 226--234.
\bibitem{bib-courcelle}
B. Courcelle:
The monadic second-order logic of graph I. Recognizable sets of finite graphs,
Inform. and Comput. 85 (1990), 12--75.
\bibitem{bib-courcelle97}
B. Courcelle:
The expression of graph properties and graph transformations in monadic second-order logic,
in: G. Rozenberg (ed.), Handbook of graph grammars and computing by graph transformations, Vol. 1: Foundations, World Scientific, 1997, 313--400.
\bibitem{bib-geelen1+}
J. Geelen, B. Gerards, G. Whittle:
On Rota's Conjecture and excluded minors containing large projective geometries,
J. Combin. Theory Ser.~B 96 (2006), 405--425.
\bibitem{bib-geelen2+}
J. Geelen, B. Gerards, G. Whittle:
Excluding a planar graph from ${\rm GF}(q)$-representable matroids,
J. Combin. Theory Ser.~B 97 (2007), 971--998.
\bibitem{bib-geelen02+}
J. Geelen, B. Gerards, G. Whittle:
Branch-width and well-quasi-ordering in matroids and graphs,
J. Combin. Theory Ser. B 84 (2002), 270--290.
\bibitem{bib-hlineny-fpt-approx}
P. Hlin{\v e}n{\'y}:
A parametrized algorithm for matroid branch-width,
SIAM J. Computing 35 (2005), 259--277.
\bibitem{bib-hlineny-msol}
P. Hlin{\v e}n{\'y}:
Branch-width, parse trees, and monadic second-order logic for matroids,
J. Combin. Theory Ser. B 96 (2006), 325--351.
\bibitem{bib-hlineny-mfcs}
P. Hlin{\v e}n{\'y}:
On matroid properties definable in the MSO logic,
in: Proc. of MFCS 2003, LNCS vol. 2747, Springer, Berlin, 2003, 470--479.
\bibitem{bib-hlineny-mfcs06}
P. Hlin{\v e}n{\'y}:
On matroid representatibility and minor problems,
in: Proc. of MFCS 2006, LNCS vol. 4192, Springer, Berlin, 2006, 505--516.
\bibitem{bib-hlineny-tutte}
P. Hlin{\v e}n{\'y}:
The Tutte polynomial for matroids of bounded branch-width,
Combin. Probab. Comput. 15 (2006), 397--406.
\bibitem{bib-hlineny-fpt}
P. Hlin{\v e}n{\'y}, S. Oum:
Finding branch-de\-compo\-si\-tion and rank-de\-compo\-si\-tion,
SIAM J. Computing 38 (2008), 1012--1032.
\bibitem{bib-hlineny-tw}
P. Hlin{\v e}n{\'y}, G. Whittle:
Matroid tree-width,
Europ. J. Combin. 27 (2006), 1117--1128.
\bibitem{bib-hlineny-tw+}
P. Hlin{\v e}n{\'y}, G. Whittle:
Addendum to Matroid tree-Width,
Europ. J. Combin. 30 (2009), 1036--1044.
\bibitem{bib-stacs}
D. Kr{\'a}l':
Computing representations of matroids of bounded branch-width,
in: Proc. of STACS 2007, LNCS vol. 4393, Springer, Berlin, 2007, 224--235.
\bibitem{bib-oum-approx}
S. Oum, P. D. Seymour:
Approximating clique-width and branch-width,
J. Combin. Theory Ser. B 96 (2006), 514--528.
\bibitem{bib-oum+}
S. Oum, P. D. Seymour:
Testing branch-width,
J. Combin. Theory Ser. B 97 (2007), 385--393.
\bibitem{bib-oxley}
J. G. Oxley:
Matroid theory,
Oxford Graduate Texts in Mathematics 3, Oxford University Press, 1992.
\bibitem{bib-seymour81}
P. Seymour:
Recognizing graphic matroids,
Combinatorica 1 (1981), 75--78.
\bibitem{bib-truemper}
K. Truemper:
Matroid decomposition,
Academic Press, 1992.
\end{thebibliography}
\end{document}